\documentclass[a4paper,10pt,journal]{IEEEtran}

\IEEEoverridecommandlockouts
\usepackage{graphicx}
\usepackage{epstopdf}
\DeclareGraphicsExtensions{.eps,.pdf} 
\graphicspath{ {figures/} }

\usepackage{cite}
\usepackage{subfigure}
\usepackage{amssymb,amsmath}
\usepackage{algorithmic}
\usepackage{color}
\usepackage{epstopdf}
\usepackage{multirow}
\usepackage{multicol}
\usepackage{cases}
\usepackage{setspace}
\usepackage{amsthm}
\usepackage{caption}
\usepackage{mathrsfs}
\usepackage{delarray}
\usepackage{tikz}
\usepackage[ruled]{algorithm2e}
\usepackage{balance}
\usepackage[mathscr]{euscript}
 \let\mathscr\relax
\usepackage[scr]{rsfso}

\allowdisplaybreaks
\usepackage{bbm}

\newtheorem{remark}{{\bf{Remark}}}

\newtheorem{lemma}{\bf {Lemma}}

\newtheorem{proposition}{{\bf Proposition}}

\newcommand{\st}{{\mathrm{s.t.}}}

\renewcommand{\algorithmicrequire}{\textbf{Input:}}

     \usepackage[compact]{titlesec}
    \titlespacing{\section}{0pt}{2ex}{1ex}
    \titlespacing{\subsection}{0pt}{1ex}{0ex}
    \titlespacing{\subsubsection}{0pt}{0.5ex}{0ex}

\usepackage{capt-of}
\usepackage{dblfloatfix}
\usepackage{varwidth}
\usepackage{tikz}

\usepackage{setspace}
\renewcommand{\baselinestretch}{1} 

\usepackage[T1]{fontenc}
\usepackage[latin9]{inputenc}
\usepackage{geometry}
\usepackage{array}
\usepackage{booktabs}
\usepackage{siunitx}

\usepackage{babel}
\usepackage{hyperref}

\newcommand{\abs}[1]{\left|#1\right|} 
\allowdisplaybreaks

\begin{document}
\markboth{IEEE Internet of Things Journal}{Vo \MakeLowercase{\textit{et al.}}: Joint Service Placement and Resource Optimization in Hierarchical Edge-Cloud Networks}

\title{\huge Joint Service Placement and Resource Optimization in Hierarchical Edge-Cloud Networks \vspace{-5pt}}
\author{
\IEEEauthorblockN{Vo Phi Son, Van-Dinh Nguyen, Minh-Tuong Nguyen, Tuan-Vu Truong, \\ Toan D. Gian, 
 Dinh Thai Hoang, Diep N. Nguyen and Symeon Chatzinotas \\}
 \thanks{This work is supported by the project VUNI.GREEN-X.RISE.AY25-27.03.}
\thanks{V. P. Son,  M.-T. Nguyen, T.-V. Truong and  T. D. Gian are with the Smart Green Transformation Center (GREEN-X) and College of Engineering and Computer Science, VinUniversity, Hanoi 100000, Vietnam  (e-mail: \{son.vp,  tuong.nm, vu.tt, toan.gd\}@vinuni.edu.vn).}
\thanks{V.-D. Nguyen (corresponding author) is with the School of Computer Science and Statistics, Trinity College Dublin, Dublin 2, D02PN40, Ireland (e-mail: dinh.nguyen@tcd.ie).}
\thanks{D. T. Hoang and D. N. Nguyen are with the School of Electrical and Data Engineering, University of Technology Sydney, Sydney, NSW 2007, Australia (e-mails: \{hoang.dinh, diep.nguyen\}@uts.edu.au).}
\thanks{S. Chatzinotas is  with the Interdisciplinary Centre for Security, Reliability and Trust (SnT), University of Luxembourg (e-mail: symeon.chatzinotas@uni.lu).}
\vspace{-5pt}}

\maketitle

\begin{abstract}
Hierarchical edge-cloud computing-aided Internet of Things (IoT) networks offer low-latency and cost-efficient services to a growing number of data-intensive IoT devices. However, optimizing service placement, which involves determining the most suitable locations within a network to deploy various services, is critical to balancing workloads dynamically and ensuring efficient resource utilization. In this paper, we jointly optimize service placement, edge/cloud cooperation, task offloading, and bandwidth allocation to enhance processing efficiency and response times. The main objective is to minimize both the overall end-to-end latency and the system cost, including service deployment and operational costs. The formulated problem belongs to the class of non-convex mixed-integer nonlinear programming, where finding a feasible solution is already challenging. Towards a stable system, we first transform the original problem into a more tractable form and then decompose it into sub-problems which are solved at different timescales.
Combining tools from relaxation and the successive convex approximation method, we develop iterative algorithms to solve these problems efficiently. With an appropriate penalty parameter, the proposed algorithms guarantee convergence to at least a local optimum. We produce extensive numerical results to demonstrate the superior performance of the proposed algorithms over benchmark schemes as well as emphasize the significance of the joint service placement and resource allocation in enhancing system performance and efficiency.
\end{abstract}
\begin{IEEEkeywords} 
Hierarchical edge-cloud computing, Internet of Things, resource allocation, service placement, successive convex approximation.
\end{IEEEkeywords}

\section{Introduction}\label{sec_Intro}
\IEEEPARstart{H}{ierarchical} edge-cloud Internet of Things (IoT) networks are increasingly prevalent in industries and services such as autonomous devices, intelligent transportation, healthcare, and augmented reality games \cite{Abbas-2018,huynh_latency,Wang_2020}. Sensors, robots, and devices communicate with access points (APs) over dynamic wireless channels, relying on edge-cloud networks for data transmission and computation. These networks must meet demands for low latency, real-time connectivity, and reliability, posing significant quality of service (QoS) challenges \cite{Vaezi_2022,EC_survey}. Hierarchical edge-cloud computing (HECC) addresses these issues by dividing computing tasks into two tiers \cite{EC_survey,Wang_2020,Shah-Mansouri_2018}. In particular, edge servers handle low-latency tasks, while cloud servers process larger workloads, optimized through joint allocation of radio resources, user association, and computation \cite{cooperation2017}. 

Task offloading plays a central role in HECC, allowing users to delegate portions of tasks to edge or cloud servers to reduce execution delays and energy consumption \cite{Wang_latency_taskoff}. However, the complexity of tasks and the heterogeneous, limited computational resources of edge servers (ESs) complicate offloading decisions \cite{Li_two_timescales}. Additionally, large data transmissions to servers can create bottlenecks \cite{huynh_latency}, while insufficient bandwidth allocation exacerbates transmission delays and increases end-to-end (e2e) latency. To address these challenges, jointly optimizing offloading strategies and resource allocation has proven to be an effective approach \cite{Liu_resource_frame}.

Building on effective task offloading, service placement in HECC plays a key role in optimizing latency, energy consumption, and network costs, including software installation, maintenance, and storage \cite{Han_2022,Chen_2021}. While the central cloud can handle most tasks, its distance leads to higher delays and transfer costs. Deploying services at ESs reduces latency and energy use \cite{Gao_2023}, but ESs have limited resources and cannot host all services. Moreover, user mobility, fluctuating demands, and unstable wireless conditions can overload ESs, and thus compromising QoS \cite{Ma_2020}. The heterogeneity of ESs' resources and varying service request densities further complicate service placement decisions and elevate system costs \cite{Han_2022, Gao_2023, Ouyang_2018}.

\subsection{Motivation}\label{sec_motivation}
To address the multifaceted challenges in HECC-aided IoT systems, the joint optimization of service placement, user association, edge cooperation, task offloading, and bandwidth allocation has become a crucial research focus. Service placement plays a fundamental role in determining where and how services are deployed across the network, directly impacting resource utilization, scalability, and overall system efficiency. By strategically placing services, the number of deployments at ESs can be minimized, reducing network costs while accounting for the limited computational and storage resources available \cite{Zhou_cost, huynh_service_placement, Han_2022, Gao_2023, Ouyang_2018}. In parallel, optimizing radio resource allocation and task offloading proportions significantly reduces task execution latency \cite{huynh_latency}. However, most existing studies primarily focus on basic cost metrics, such as computation and transmission overhead, while overlooking broader system challenges, such as dynamic workload balancing, service migration, and network congestion.

Maximizing the potential of HECC-aided IoT systems requires addressing complex, interconnected challenges. These include ensuring high QoS requirements, managing heterogeneous ES computational resources, mitigating limited radio resources, and adapting to unpredictable user mobility and service demands under unstable wireless conditions\cite{Gao_service_place,Wang_2020,Shah-Mansouri_2018,Ouyang_2018}. Simultaneous task offloading from multiple users can overwhelm ESs, creating bottlenecks that increase processing delays and e2e latency \cite{huynh_latency}. Moreover, increasing operational costs from supporting more services in ESs further complicate system efficiency, especially without optimized user association, edge-edge cooperation, and edge-cloud cooperation.

While the joint optimization of service placement and task offloading can effectively reduce latency, frequent updates on short timescales can significantly increase system costs \cite{Ouyang_2018}. For instance, shorter time slots may require ESs to repeatedly install or remove computationally intensive services, such as video analytics or machine learning, to adapt to dynamic user demands, leading to excessive overhead \cite{Farhadi_2019,Li_two_timescales}. A promising solution is to integrate long-term service placement and user association optimization with short-term task offloading and radio resource allocation \cite{Liang_two_timescales,Wei_two_timescales}. This hierarchical optimization approach enhances efficiency, scalability, and cost-effectiveness in HECC-aided IoT systems by balancing system stability with adaptability to real-time demands.

\subsection{Main Contributions}\label{sec_main_contribution}
We propose a hierarchical optimization framework that integrates long-term service placement and user association with short-term task offloading and radio resource allocation for HECC-aided IoT systems. The proposed framework jointly optimizes system configuration, task offloading, bandwidth allocation, and dynamic user demands to improve ES utilization and QoS. By jointly optimizing these key components, the framework minimizes end-to-end latency and operational costs while leveraging edge-edge and edge-cloud cooperation to avoid service bottlenecks. This scalable and cost-effective solution is well suited for dynamic and resource-constrained IoT environments. The main contributions of this paper are summarized as follows:
\begin{itemize}
    \item We develop a unified optimization framework that jointly optimizes service placement, user association, edge-edge cooperation, edge-cloud cooperation, task offloading, and bandwidth allocation to minimize overall end-to-end latency and system cost in HECC-aided IoT networks. In particular, we design an efficient three-tier task offloading mechanism across users, edge servers, and the cloud by jointly determining the portions of tasks executed locally, at edge servers, or at the cloud, together with system configuration decisions such as user association and service placement. The proposed framework effectively addresses heterogeneous computing resources, ES capacity constraints, and wireless channel uncertainties. In addition, we introduce a system cost metric that captures the monetary cost of service placement while ensuring the required QoS.
    
    \item The formulated problem belongs to the class of nonconvex mixed-integer nonlinear programming (MINLP), where even finding a feasible solution is challenging. By exploiting the inherent structure of the problem, we transform it into a more tractable form through three lemmas on service placement, connectivity-dependent offloading, and edge-edge and edge-cloud cooperation policies. The resulting problem is decomposed into two subproblems operating at different timescales to ensure a stable network configuration. By combining successive convex approximation (SCA) and penalty-based techniques \cite{AmirBeck2010}, we develop efficient iterative algorithms that converge to at least a local optimum.

    \item Extensive numerical results demonstrate the effectiveness of the proposed framework in improving network resource utilization. The results show that the proposed approach consistently outperforms benchmark schemes across multiple performance metrics while effectively reducing system costs.
\end{itemize}
Table~\ref{tab:comparison}  summarizes the key differences between our work and the most relevant prior studies, showing that the proposed framework provides a more comprehensive optimization approach than existing methods.
\begin{table*}[!t]
\centering
\caption{Comparison of Notable Related Works}
\label{tab:comparison}
\footnotesize
\renewcommand{\arraystretch}{1.15}
\setlength{\tabcolsep}{4pt}

\begin{tabular}{clcccccccccccccc}
\toprule
\# & Criterion & Our Work & \cite{huynh_latency} & \cite{Li_two_timescales} &\cite{Liu_resource_frame}  &  \cite{Han_2022} & \cite{Gao_2023} & \cite{Ouyang_2018} & \cite{Gao_service_place}   &   \cite{Liang_two_timescales}  & \cite{ Wei_two_timescales}   & \cite{ Yan_2020} & \cite{Zhang_2021} & \cite{Huynh_IOT_2024} & \cite{Fan_TMC_2024} \\
\midrule

1 & Hierarchical user-edge-cloud computing
& $\checkmark$ & $\checkmark$ & $\times$ & $\times$ & $\times$ & $\times$ & $\times$ & $\times$ & $\times$ & $\checkmark$ & $\times$ & $\checkmark$ & $\times$ & $\checkmark$\\

2 & Joint optimization of latency and system cost
& $\checkmark$ & $\times$ & $\times$ & $\times$ & $\checkmark$ & $\checkmark$ & $\checkmark$ & $\times$ & $\checkmark$ & $\times$ & $\times$ & $\checkmark$ & $\checkmark$ & $\checkmark$\\

3 & Service placement optimization
& $\checkmark$ & $\times$ & $\checkmark$ & $\times$ & $\checkmark$ & $\checkmark$ & $\checkmark$ & $\checkmark$ & $\times$ & $\times$ & $\checkmark$ & $\checkmark$ & $\checkmark$ & $\checkmark$\\

4 & User association optimization 
& $\checkmark$ & $\checkmark$ & $\times$ & $\checkmark$ & $\checkmark$ & $\times$ & $\times$ & $\checkmark$ & $\checkmark$ & $\times$ & $\times$ & $\times$ & $\times$ & $\times$ \\

5 & Edge-edge cooperation
& $\checkmark$ & $\times$ & $\checkmark$ & $\times$ & $\times$ & $\times$ & $\checkmark$ & $\checkmark$ & $\times$ & $\times$ & $\times$ & $\times$ & $\times$ & $\checkmark$ \\

6 & Edge-cloud cooperation
& $\checkmark$ & $\checkmark$ & $\times$ & $\times$ & $\times$ & $\times$ & $\times$ & $\times$ & $\times$ & $\times$ & $\times$ & $\checkmark$ & $\times$ & $\checkmark$\\

7 & Task offloading optimization
& $\checkmark$ & $\checkmark$ & $\checkmark$ & $\checkmark$ & $\times$ & $\checkmark$ & $\times$ & $\times$ & $\checkmark$ & $\checkmark$ & $\checkmark$ & $\checkmark$ & $\checkmark$ & $\checkmark$\\

8 & Bandwidth allocation optimization
& $\checkmark$ & $\times$ & $\times$ & $\times$ & $\times$ & $\times$ & $\times$ & $\times$ & $\times$ & $\times$ & $\times$ & $\times$ & $\checkmark$ & $\times$\\

9 & Cost-aware service placement model
& $\checkmark$ & $\times$ & $\checkmark$ & $\times$ & $\checkmark$ & $\checkmark$ & $\times$ & $\checkmark$ & $\checkmark$ & $\checkmark$ & $\checkmark$ & $\checkmark$ & $\checkmark$ & $\checkmark$\\

10 & MINLP problem
& $\checkmark$ & $\checkmark$ & $\checkmark$ & $\times$ & $\checkmark$ & $\times$ & $\times$ & $\times$ & $\times$ & $\times$ & $\times$ & $\times$ & $\checkmark$ & $\times$\\

\bottomrule
\end{tabular} 
\end{table*}

\subsection{Paper Structure and Notations}\label{sec_paper_structure_notations}
The rest of the paper is organized as follows. Section \ref{sec_Relatedwork} reviews related work, while Section \ref{sec_SystemModel} describes the system model. In Section \ref{sec_Problem}, we formulate the problem and transform it into a more tractable form. Section \ref{sec_Solutions} presents the proposed solution algorithms. Numerical results are given in Section \ref{sec_NumericalResults}, and conclusions are drawn in Section \ref{sec_conclusion}.

\textit{Notation:} Throughout the paper, scalars, vectors, and matrices are denoted by lower-case, boldface lower-case, and boldface upper-case letters, respectively. The modulus of a complex number and the Euclidean norm of a vector are denoted by $\abs{\cdot}$ and $\|\cdot\|$, respectively.

\section{Related Work}\label{sec_Relatedwork}
Task offloading is central to mobile edge computing (MEC) and edge-cloud computing networks for reducing task execution time, conserving mobile user energy, and improving network utility \cite{huynh_latency, Yan_2020, Zhang_2021}. For instance, the authors in  \cite{huynh_latency} developed an alternating optimization-based framework integrating task offloading, resource allocation, user association, and cloud assistance to reduce e2e latency in hierarchical edge-cloud systems. Yan \textit{et al.} \cite{Yan_2020} employed a bi-section search method to address the joint task offloading and resource allocation problem in a task dependency model, aiming to minimize user energy consumption and task processing latency in MEC. Similarly, Zhang \textit{et al.} \cite{Zhang_2021} designed an approximate algorithm leveraging semidefinite relaxation and alternating optimization to minimize delay costs by jointly optimizing task offloading and resource allocation, with service caching for both independent and service-dependent tasks in MEC systems. 

On the other hand, a digital twin approach was introduced in \cite{Huynh_2022}, aiming to minimize worst-case latency in computation offloading for industrial IoT applications through joint user association, task offloading, and resource allocation. Zhao \textit{et al.} \cite{Zhao_taskoff} proposed a novel framework for optimal resource-sharing contracts between edge and cloud servers, jointly optimizing resource-sharing and computation strategies. Lu \textit{et al.} \cite{Lu_taskoff} introduced a lightweight offloading framework for multi-task scenarios to minimize execution time and energy consumption, while Adhikari \cite{Adhikari_task_deadline} demonstrated the efficacy of a priority-aware task offloading framework that uses multi-level feedback queues to meet task deadlines. Lastly, Zhao \textit{et al.} \cite{Zhao_task_UAV_deepReif} developed a multi-agent reinforcement learning framework for UAV systems, optimizing UAV trajectories, computational resources, and radio resource allocation to reduce execution time and energy consumption. Despite their contributions, these studies assume that all services are pre-installed on ESs, leading to higher system costs.

Multi-timescale optimization has emerged as an effective approach for NP-hard problems in edge-cloud computing, addressing the interactions between binary and continuous variables in objective functions and constraints \cite{Liu_resource_frame, Liang_two_timescales, Wei_two_timescales, Li_two_timescales}. For example, a two-timescale framework was developed in \cite{Li_two_timescales} to asynchronously optimize service placement and task offloading, maximizing long-term network utility in MEC systems. Liu \textit{et al.} \cite{Liu_resource_frame} optimized joint long-term user association and short-term dynamic task offloading to improve task reliability and reduce delays. Wei \textit{et al.} \cite{Wei_two_timescales} demonstrated that coordinated resource placement and task offloading improves QoS satisfaction, while Liang \textit{et al.} \cite{Liang_two_timescales} employed a two-timescale algorithm to manage service migration and transmission power for long-term energy savings. Despite these advances, existing two-timescale approaches assume infrequent, hour- to day-level service placement, limiting their ability to react to rapidly changing IoT workloads and network conditions. 

Recently, service placement optimization frameworks have been developed to minimize costs, delays, and computational resource usage in ES networks \cite{Han_2022, Zhang_2022, Gao_2023, Gao_service_place, Ouyang_2018, Chen_service_place, Liu_2022,Huynh_IOT_2024}. Pengchao \textit{et al.} optimized service placement and base station power control to balance delays in MEC networks. The authors in \cite{Zhang_2022} addressed varying service request rates and pricing by jointly optimizing service placement and ES deployment to maximize network profits. A heuristic framework for service placement in satellite-terrestrial edge systems was introduced in\cite{Gao_2023} that minimizes response latency and energy use. Additionally, Gao \textit{et al.} \cite{Gao_service_place} used matching and game theory to jointly optimize service placement and network access, reducing delays and switching costs. The work in \cite{Ouyang_2018} applied a Markov approximation algorithm to dynamic service placement for minimizing costs under user mobility.  However, these studies overlook edge-edge and edge-cloud cooperation to address ES resource constraints, heterogeneous task requirements, and radio limitations, which are crucial for meeting e2e latency and cost objectives.

\section{System Model}\label{sec_SystemModel}
\begin{figure}[t]
	\centering
	\includegraphics[width=0.98\columnwidth,trim={0cm 0.0cm 0cm 0.0cm}]{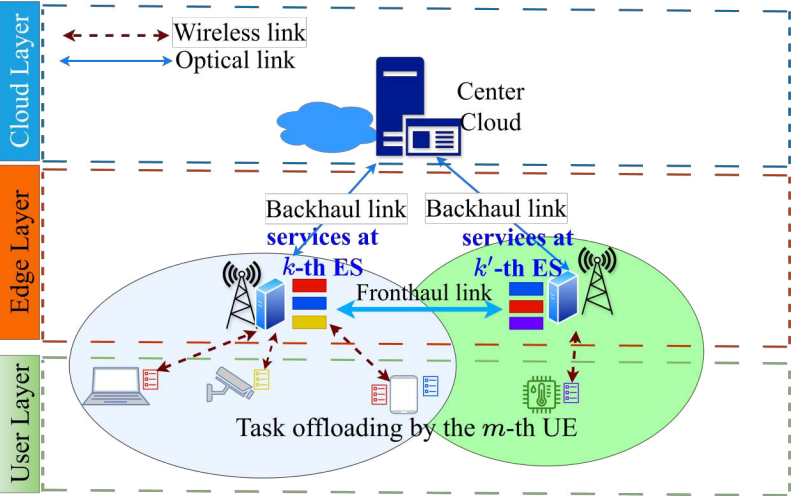}
	\caption{A hierarchical edge-cloud computing systems aided IoT network model.}
	\label{fig:system_model}
\end{figure}
We consider a service-oriented multi-tier computing system (\textit{e.g.} user, edge and cloud layers) for IoT networks, as shown in Fig.~\ref{fig:system_model}. The user equipment (UE) set, denoted as $\mathcal{M} = \{1,2,\dots,M\}$, consists of 
$M$ spatially distributed devices. Each user device, equipped with a single antenna, communicates with a $L$-antenna access point (AP) to offload tasks to the edge layer. The edge layer comprises the set $\mathcal{K} = \{1,2,\dots,K\}$ of $K$ ESs, colocated with APs. Tasks offloaded by UE $m \in \mathcal{M}$ can be processed at the receiving ES $k \in \mathcal{K}$, transferred to a neighboring ES $k' \in \mathcal{K}$, or forwarded to the cloud via fronthaul or backhaul links. This is facilitated by ESs and edge-cloud cooperation for optimized service deployment and placement \cite{cooperation2017, longchen2020,Fan_TMC_2024}.

\subsection{Service Placement and Cooperation Model} \label{sec_Service_placement}
We consider a time-slotted system with long- and short-term timeslots. To ensure system stability, we optimize service placement and server cooperation in the long-term timeslot, while task offloading and resource allocation are optimized at the short-term timeslot to minimize the e2e latency. Let ${\Delta}$ and $\delta$ represent the durations of the long-term timeslot $t$ and short-term timeslot $t_j$, respectively, with ${\delta} = \frac{{\Delta}}{J}$, where $J$ is the number of short-term timeslots within each long-term timeslot.
We denote the set of services as $\mathcal{S} \buildrel \Delta \over = \{1, 2, \dots, S\}$, where $S$ is the total number of services. The service placement decision variable $g_s^k[t] \in \{0, 1\}$ indicates whether the $s$-th service is installed on the $k$-th server at time $t$. Due to resource constraints, each ES can host only a limited number of services, yielding $\sum\nolimits_{s \in \mathcal{S}} {g_s^k\left[ t \right]}  \le S_k^{\max}, \forall k$, with $S_k^{\max} \leq S,\, \forall k. $ 

For task offloading, we introduce new binary variables $\sigma_{m,s}^k[t] \in \{0, 1\}$ to indicate whether UE $m$ offloads tasks for service $s$ to ES $k$ during the long-term timeslot $t$. Specifically, $\sigma_{m,s}^k[t] = 1$ means UE $m$ offloads tasks to ES $k$, and $\sigma_{m,s}^k[t] = 0$, otherwise. We consider that each UE can connect to only one ES at time $t$, such as: $\sum\nolimits_{k\in \mathcal{K}} \sigma_{m,s}^k\left[t\right] \leq 1, \forall m$. The binary variable $\sigma_{m,s}^{k,k'}[t] \in \{0, 1\}$ represents edge-edge cooperation. $\sigma_{m,s}^{k,k'}[t] = 1$ means ES $k$ decides to transfer a task for service $s$ requested by UE $m$ to ES $k'$ for execution during frame $t$, and $\sigma_{m,s}^{k,k'}[t] = 0$ otherwise. We note that if $k' \equiv k$ (\textit{i.e.} $\sigma_{m,s}^{k,k}[t] = 1$), the task is executed locally. Finally, we introduce binary variables $\sigma_{m,s}^{k,\mathtt{c}}[t] \in \{0, 1\}$ to indicate edge-cloud cooperation, where $\sigma_{m,s}^{k,\mathtt{c}}[t] = 1$  indicates that ES $k$ offloads tasks to the cloud for execution, and $\sigma_{m,s}^{k,\mathtt{c}}[t] = 0$ otherwise.

\subsection{Service-oriented Offloading and Computational Models} \label{sec_task_offloading_model}
Let ${\Gamma _{m,s}}\left[ t_j \right] \buildrel \Delta \over = \left(T_{m}^{\max }, {a_{m,s}}\left[ t_j \right], {\omega _{m,s}}\left[ t_j \right] \right)$ denote the task associated with service $s \in \mathcal{S}$ of UE $m$ offloads to the edge layer, where $T_{m}^{\max }$ is the maximum latency requirement (seconds), $a_{m,s}\left[ t_j \right]$ is the task size (bits), and $\omega _{m,s}\left[ t_j \right]$ is the required CPU cycles (cycles). We define $\boldsymbol{\varphi} \left[ t_j \right] \triangleq \left\{ \varphi _{m,s}\left[ t_j \right] \right\}_{\forall m}$ as the portion vector of the task executed locally, where $0 \leq \varphi _{m,s}\left[ t_j \right] \leq 1, \forall m$. Consequently, $\left( 1 - \varphi _{m,s}\left[ t_j \right] \right)$  denotes the remaining portion of the task that cannot be further partitioned and is therefore offloaded to an edge server or the cloud \cite{Fan_TMC_2024}. We note that each UE $m \in \mathcal{M}$ requests only one service $s \in \mathcal{S}$ during each long-term timeslot $t$. Therefore, the requested service of UE $m$ remains unchanged during the short-term timeslots $t_j$ within the same long-term timeslot.

Given a processing rate $f_m^{\mathtt{ue}}$, the local processing time at UE $m$ can be given as
\begin{align}\label{eq:T_ue_cp}
    T_{m,s}^{\mathtt{ue,cp}}[t_j] = \frac{{\varphi_{m,s} \left[ t_j \right]{\omega _{m,s}}\left[ t_j \right]}}{{f _m^{\mathtt{ue}}}}.
\end{align}
If the offloaded task for service $s$ from UE $m$ is processed at ES $k$, the processing latency  is given as
\begin{align}\label{eq:T_k_cp}
    T_{m,s}^{k,\mathtt{cp}}[t_j] = \frac{{\left( 1 - \varphi_{m,s} \left[ t_j \right] \right){\omega _{m,s}}\left[ t_j \right]}}{{f_{mk}}}
\end{align}
where ${f_{mk}}$ (cycles/second) is the allocated processing rate from ES $k$.
For offloading to the central cloud, with a computational resource of ${f_{m\mathtt{c}}}$ (cycles/second), the corresponding processing latency is
\begin{align}\label{eq:T_c_cp}
    T_{m,s}^{\mathtt{c,cp}}[t_j] = \frac{{\left( 1 - \varphi_{m,s} \left[ t_j \right] \right){\omega _{m,s}}\left[ t_j \right]}}{{f_{m\mathtt{c}}}}.
\end{align}

To enable the servers' cooperation, each ES $k$ will process both the offloaded tasks for service $s$ from its connected users and tasks forwarded by neighboring ESs $k' \in \mathcal{K}$.
Thus, the total allocated computational resource for ES $k$ at the long-term timeslot $t$ is given as
\begin{align}\label{eq_fk}
&f_k^{\mathtt{tot}}[t] =  \sum_{m \in\mathcal{M}}\Big( 
g_s^k[t]\sigma_{m,s}^k[t] \nonumber \\
& \times  \big(1 - \sum_{k'\in\mathcal{K}\setminus \{k\}} \sigma_{m,s}^{k,k'}[t] 
- \sigma_{m,s}^{k,\mathtt{c}}[t]\big)f_{mk} \Big)\nonumber \\
& + \sum_{k'\in\mathcal{K}\setminus \{k\}} \sum_{m' \in\mathcal{M}\setminus\{m\}} \Big( 
g_s^k[t]\sigma_{m',s}^{k'}[t]\sigma_{m,s}^{k',k}[t] f_{m'k} \Big).
\end{align}  
We note that the condition $\sum_{k'\in\mathcal{K}\setminus \{k\}} \sigma _{m,s}^{k,k'}\left[ t \right] + \sigma _{m,s}^{k,\mathtt{c}}\left[ t \right] \le 1$ ensures that if ES $k$ offloads the task for service $s$ from UE $m$, it can do so to only one neighboring ES $k'\neq k$ or the cloud. Next, the total computational resource allocation of the cloud server, denoted as $f_\mathtt{c}^{\mathtt{tot}}\left[ t \right]$, is given as 
\begin{align}\label{fc_total}
\begin{split}
    f_\mathtt{c}^{\mathtt{tot}}[t] = \sum_{k \in \mathcal{K}} \sum_{m \in \mathcal{M}}
    &\sigma_{m,s}^k[t_j] \sigma_{m,s}^{k,\mathtt{c}}[t_j] f_{m\mathtt{c}}.
\end{split}
\end{align} 
Each ES $k$ and the cloud have maximum computing capacities of $F_k^{\max }$ and $F_\mathtt{c}^{\max }$, respectively.  Due to these capacity limitations, the total allocated computing resources must satisfy $f_k^{\mathtt{tot}}[t] \leq F_k^{\max }, \forall k$ and $f_\mathtt{c}^{\mathtt{tot}}[t] \leq F_\mathtt{c}^{\max }$.

\subsection{Time Delay Model}\label{sec_transmission_model}
Latency is classified into three types: Propagation delay, processing delay, and transmission delay.

\noindent\textbf{Transmission delay:} UEs and APs communicate using the frequency division multiple access (FDMA) technique. Let $W$ (Hz) denote the total system bandwidth, and  ${b_m}\left[ {{t_{j}}} \right]$ represent the bandwidth fraction  allocated to UE $m$, satisfying $\sum_{m \in\mathcal{M}} {{b_m}\left[ {{t_{j}}} \right]} \leq 1$. The channel vector between UE $m$ and AP $k$ can be modeled as $\boldsymbol{h}_m^k\left[ {{t_{j}}} \right] = \sqrt {\mathtt{PL}_m^k\left[t\right]} \,\,\bar {\boldsymbol{h}}_m^k\left[ {{t_{j}}} \right]$, where ${\mathtt{PL}_m^k\left[ {t} \right]}$ captures large-scale fading (shadowing and path loss), and $\bar {\boldsymbol{h}}_m^k\left[ {{t_{j}}} \right] \sim \mathcal{CN}\left( {0,{\textbf{I}_L}} \right)$ represents the small-scale fading following the Rayleigh fading. The signal-to-noise ratio of UE $m$ received at AP $k$ can be computed as
\begin{align}\label{eq:SNR}
 \gamma _m^k[t_{j}] = \frac{{{P_m}{{\| {\boldsymbol{h}_m^k\left[ {{t_{j}}} \right]} \|}^2}}}{{{b_m}\left[ {{t_{j}}} \right]W{N_0}}}
\end{align}
where ${P_m}$ is the UE $m$'s transmit power and ${N_0}$ denotes the noise power density. Thus, the overall uplink data rate of UE $m$ can be given as
\begin{align}\label{eq:Rm}
   {R_{m,s}}[t_{j}]=  \sum_{k \in \mathcal{K}} \sigma _{m,s}^k\left[ t \right]{b_m}\left[ {{t_j}} \right] 
 \frac{W}{{\ln 2}}\ln \left( {1 + \gamma _m^k[t_j]} \right).  
\end{align} 
Given the task size $a_{m,s}[t_j]$ and the offloading portion $({1 - {\varphi _{m,s}}[t_j]})$, the offloading transmission delay from UE $m$ to ES $k$ at time-slot $t_j$ is given as
\begin{align}\label{eq:T_mk_radio}
T_{m,s}^k\left[ {{t_j}} \right] = \frac{{\left( {1 - {\varphi _{m,s}}\left[ {{t_j}} \right]} \right){a_{m,s}}\left[ {{t_j}} \right]}}{{ R_{m,s}[t_{j}]}}.
\end{align}
Next, the transmission delay for offloading task $\sigma_{m,s}^k[t](1-\varphi _{m,s}[t_j])$ of all UEs $m\in \mathcal{M}$ from ES $k$ to the cloud server via the backhaul link is
\begin{align}\label{eq:T_kc_ts}
    T_{k,s}^{\mathtt{c}}\left[ {{t_{j}}} \right] = \frac{{\sum_{m \in \mathcal{M}} \sigma_{m,s}^k[t]\sigma_{m,s}^{k,\mathtt{c}}[t]\left( {1 - \varphi _{m,s}\left[ {{t_{j}}} \right]} \right){a_{m,s}}\left[ {{t_j}} \right]}}{{R_k^\mathtt{c}\left[ {{t_{j}}} \right]}}
\end{align} 
where ${R_k^{\mathtt{c}}\left[ {{t_{j}}} \right]}$ is the backhaul transmission rate. Similarly, given the fronthaul transmission rate ${R_k^{k'}\left[ {{t_{j}}} \right]}$,  the transmission delay for offloading task $\sigma_{m,s}^k[t](1-\varphi _{m,s}[t_j])$ of all UEs $m\in \mathcal{M}$  from ES $k$ to a neighboring server ES $k'$ is given as:
\begin{align}\label{eq:T_kk_ts}
   &T_{k,s}^{k'}[t_j] = \sum_{m \in \mathcal{M}} \bigg( \sigma_{m,s}^k[t] \sum_{k' \in \mathcal{K} \setminus \{k\}}  \sigma_{m,s}^{k,k'}[t] g_s^{k'}[t]  \nonumber \\
   &\qquad\qquad\times \left(1 - \varphi _{m,s}[t_j] \right) a_{m,s}[t_j] \bigg) / R_k^{k'}[t_j].
\end{align}
From \eqref{eq:T_mk_radio}-\eqref{eq:T_kk_ts}, the total transmission delay for the offloaded task associated with service $s$ from UE $m$ is expressed as
\begin{align}\label{T_tot_ts}
  T_{m,s}^{\mathtt{tot,t}}[t_j] = T_{m,s}^k[t_j] + \max_{\forall k} \biggr\{ T_{k,s}^{\mathtt{c}}[t_j]+ T_{k,s}^{k'}[t_j] \biggr\}.
\end{align}

\noindent\textbf{Propagation delay:}  We omit the propagation delay from UE $m$ to AP $k$ due to its short communication range. The propagation delay between ES $k$ and the cloud can be expressed as 
\begin{align}\label{eq:T_kc_pro} T_{k}^{\mathtt{c,pro}}\left[ {{t_{j}}} \right] = \frac{{{d_{k}^{\mathtt{c}}}}}{v} 
\end{align} 
where $v$ is the signal propagation speed through the fiber link over distance ${{d_{k}^{c}}}$. Similarly, the propagation delay between ES $k$ and  neighboring ES $k'$ is given as
\begin{align}\label{eqL:T_kk_pro} T_{k}^{k',\mathtt{pro}}\left[ {{t_{j}}} \right] = \frac{{{d_{k}^{k'}}}}{v} 
\end{align} 
where ${d_{k}^{k'}}$ is the corresponding distance with ${d_{k}^{k'}} \le {d_{k}^{\mathtt{c}}}$.
From \eqref{eq:T_kc_pro}-\eqref{eqL:T_kk_pro}, the overall propagation delay for the offloaded task associated with service $s$ from UE $m$ can be expressed as
\begin{align}\label{T_tot_pro}
   &T_{m,s}^{\mathtt{tot,pro}}\left[ {{t_j}} \right] = \max_{\forall k} \Big\{\sigma _{m,s}^k\left[ t \right]\sigma _{m,s}^{k,\mathtt{c}}\left[ t \right]T_{k}^{\mathtt{c,pro}}\left[ {{t_j}} \right]\nonumber \\
 &\quad+ \sigma _{m,s}^k\left[ t \right]\sum_{k' \in \mathcal{K} \setminus \{k\}} {\sigma _{m,s}^{k,k'}\left[ t \right]g_s^{k'}\left[ t \right]T_{k}^{k',\mathtt{pro}}\left[ {{t_j}} \right]}  \Big\}. 
\end{align}

\noindent\textbf{Processing delay:}
From \eqref{eq:T_ue_cp}-\eqref{eq:T_c_cp}, the total processing delay for a task of UE $m$ can be computed as
\begin{align}
T_{m,s}^{\mathtt{tot,cp}}[t_j] &= \underbrace{T_{m,s}^{\mathtt{ue,cp}}[t_j]}_{\text{Local processing time}} + \max_{\forall k} \Big\{ T_{m,s}^{k,\mathtt{cp}}[t_j] \nonumber \\
  &\times \underbrace{\sigma_{m,s}^k[t]\Big(1 - \sum_{k' \in \mathcal{K} \setminus \{k\}} \sigma_{m,s}^{k,k'}[t] - \sigma_{m,s}^{k,\mathtt{c}}[t]\Big)g_s^k[t]}_{\text{Processing at associated ES k condition}}  \nonumber \\
 &+  \underbrace{\sigma_{m,s}^k[t]\sum_{k' \in \mathcal{K} \setminus \{k\}} \sigma_{m,s}^{k,k'}[t]g_s^{k'}[t]}_{\text{Processing at ES k' condition}}T_{m,s}^{k',cp}[t_j]  \nonumber \\
& + \underbrace{\sigma_{m,s}^k[t]\sigma_{m,s}^{k,\mathtt{c}}[t]}_{\text{Processing at CS condition}}T_{m,s}^{\mathtt{c,cp}}[t_j] \Big\}. \label{T_tot_cp}
\end{align}

\noindent\textbf{The overall latency:}
The overall e2e latency for the computational task associated with service $s$ of UE $m$, accounting for propagation, processing, and transmission latencies, is expressed as:
\begin{align}
    T_{m,s}^{\mathtt{e2e}}\left[ {{t_{j}}} \right] = \underbrace{T_{m,s}^{\mathtt{tot,pro}}\left[ {{t_{j}}} \right]}_{\text{Propagation delay}} + \underbrace{T_{m,s}^{\mathtt{tot,cp}}\left[ {{t_{j}}} \right]}_{\text{Processing delay}} + \underbrace{T_{m,s}^{\mathtt{tot,t}}\left[ {{t_{j}}} \right].}_{\text{Transmission delay}}
\end{align}

\subsection{Energy Consumption Model}\label{sec_energy_consumption_model}
We also consider the total energy consumption of UE $m$, which includes local processing energy $(E_{m,s}^{\mathtt{cp}}[t_j])$ and wireless transmission energy $(E_{m,s}^{\mathtt{t}}[t_j])$. This can be expressed as \cite{ling2021}:
\begin{align}
E_{m,s}^{\mathtt{tot}}[t_j] &= E_{m,s}^{\mathtt{cp}}[t_j] + E_{m,s}^{\mathtt{t}}[t_j]  \nonumber \\
 & = \frac{{{\mu _m}}}{2}{\varphi _{m,s}}\left[ {{t_j}} \right]{\omega _{m,s}}\left[ {{t_j}} \right]{\left( {f_m^{\mathtt{ue}}} \right)^2} \nonumber \\
 &\quad + {P_m}\frac{{\left( {1 - {\varphi _{m,s}}\left[ {{t_j}} \right]} \right){a_{m,s}}\left[ {{t_j}} \right]}}{{{R_{m,s}} \left[ {{t_j}} \right] }}
\end{align}
where ${\mu _m} / 2$ is the computational energy parameter of UE $m$ (in $\text{Watt}\cdot\text{Second}^3/\text{Cycle}^3$).

\section{Problem Formulation and Analysis}\label{sec_Problem}
\subsection{The Objective Function}
 Let ${\tau ^\mathtt{u}}$ and ${\tau ^\mathtt{i}}$ denote the costs for uninstalling and installing a service on an ES ($\$/\text{service}$) \cite{Jiaming_2024}, respectively, with ${\tau ^\mathtt{u}} \leq {\tau ^\mathtt{i}}$. The change in the status of the $s$-th service on ES $k$ between the long-term timeslots $(t-1)$ and $t$ is given by $\lambda_s^k[t] = g_s^k[t] - g_s^k[t-1]$, with
\begin{align}
    \lambda_s^k\left[ t \right] = \left\{ \begin{array}{ll}
    -1, & \text{if $s$ is uninstalled;}\\
    0, & \text{if $s$ remains installed;}\\
    1, & \text{if $s$ is installed.}
    \end{array} \right.
\end{align}
As a result, we model the monetary cost of status change as
\begin{align}\label{eq:deploy_cost}
x_s^k\left[ t \right]
 = \frac{1}{2} {\left( \lambda_s^k\left[ t \right] \right)}^2 \left( {\tau ^\mathtt{u}} + {\tau ^\mathtt{i}} \right) - \frac{1}{2}\lambda_s^k\left[ t \right]\left( {\tau ^\mathtt{u}} - {\tau ^\mathtt{i}} \right).
\end{align}

Next, let ${\tau^\mathtt{o}}$ represent the cost of operating a service ($\$/\text{service}$). The cost for the $s$-th service on ES $k$ is
\begin{align}\label{operation_cost}
    x_\mathtt{o}^k\left[ t \right] = g_s^k\left[ t \right] {\tau ^\mathtt{o}}.
\end{align}

Furthermore, offloading tasks to the cloud incurs costs for data transmission and processing \cite{wang2021eihdp_price,Li_twotime_cost,Thinh_physycal_cost}. Let ${\tau^\mathtt{r}}$ denote the price per request ($\$/\text{request}$). The total request cost is:
\begin{align}\label{eq:request_cost}
    {x_\mathtt{r}}\left[ t \right] = \sum_{k \in \mathcal{K}} {\sum_{m \in \mathcal{M}} { {\sigma _{m,s}^{k,\mathtt{c}}\left[ t \right]} } } \,{\tau ^\mathtt{r}}.
\end{align}

\noindent \textbf{The total monetary cost:} Combining  \eqref{eq:deploy_cost}-\eqref{eq:request_cost}, the system's total monetary cost at the long-term time-slot $t$ is
\begin{align}
  {x^{\mathtt{tot}}}\left[ t \right] =& \sum_{k \in \mathcal{K}} {\sum_{s \in \mathcal{S}} {\left( {x_s^k\left[ t \right] + g_s^k{\tau ^\mathtt{o}}} \right)} } \nonumber\\
& + \sum_{k \in \mathcal{K}} {\sum_{m \in \mathcal{M}} { {\sigma _{m,s}^{k,\mathtt{c}}\left[ t \right]} } } {\mkern 1mu} {\tau ^\mathtt{r}},
\end{align}
satisfying $x^{\mathtt{tot}}[ t] \leq {X^{\max }}$, where ${X^{\max }}$ denotes
the maximum allowable system cost at the long-term timeslot $t$.
We define the average monetary cost over a long-term horizon $\mathcal{T} \triangleq \{1, 2, \cdots, T\}$ as
\begin{align}
    \overline x  = \mathop {\lim }\limits_{T \to  + \infty } \frac{1}{T}\sum_{t \in \mathcal{T}} {{x^{\mathtt{tot}}}\left[ {{t}} \right]}. 
\end{align}

The objective function is to minimize both the total e2e latency and monetary cost, which can be expressed as
\begin{align}\small
\eta \left( {{\boldsymbol{\delta }}\left[ t \right],{\boldsymbol{\varphi }}\left[ {{t_j}} \right],{{\boldsymbol{b}}}\left[ {{t_j}} \right]} \right) 
 ={\omega ^t}\sum_{m \in \mathcal{M}} { {T_{m,s}^{\mathtt{e2e}}\left[ {{t_j}} \right]} }
  + {\omega ^\mathtt{c}}{x^{\mathtt{tot}}}[t]
\end{align}where $\boldsymbol{\delta}\left[ {{t}} \right] \triangleq \big\{ {\sigma _{m,s}^k\left[ {{t}} \right],g_s^k\left[ t \right],\sigma _{m,s}^{k,k'}\left[ {{t}} \right],\sigma _{m,s}^{k,\mathtt{c}}\left[ {{t}} \right]} \big\}_{\forall m,k   }$ denotes the column vector encompassing all binary variables, $\boldsymbol{b}[t_j]\triangleq\{b_m[t_j]\}_{\forall m   }$, and ${\omega ^t}$ and ${\omega ^c}$ are the weight factors with ${\omega ^t} + {\omega ^\mathtt{c}} = 1$. The weight factors are determined according to the system design priorities. For example, assigning a larger $w^t$ places greater emphasis on minimizing the total e2e latency, whereas assigning a larger $w^c$ prioritizes reducing the system cost. In addition, sensitivity analysis is conducted to ensure that the selected weights achieve an appropriate trade-off between latency and cost while satisfying the latency requirements.

\subsection{Problem Formulation}
This paper tackles the challenge of optimizing joint service placement, edge/cloud cooperation, task offloading, and bandwidth allocation. The goal is to minimize both the total e2e latency and the monetary cost while ensuring compliance with specified QoS, energy consumption, and resource constraints. The problem is mathematically formulated at a short-term timeslot $t_j$ as follows:
\begin{subequations} \label{eq:Main}
	\begin{IEEEeqnarray}{cl}
		\mathop {\min }\limits_{{\boldsymbol{\delta}},\boldsymbol{\varphi} ,\boldsymbol{b}} &\quad \eta \left( {{\boldsymbol{\delta}}[t],\boldsymbol{\varphi} \left[ {{t_{j}}} \right],\boldsymbol{b}\left[ {{t_{j}}} \right]} \right)\label{eq:Maina} \\
		 \st &  T_{m,s}^{\mathtt{e2e}}\left[ {{t_{j}}} \right] \le T_m^{\max },\, \forall m                     \label{eq:Mainb}\\
		&E_{m,s}^{\mathtt{tot}}[t_j] \le E_m^{\max },\, \forall m \label{eq:Mainc} \\
        &R_{m,s}[t_j] \ge {R_{\min }},\,\forall m\label{eq:Maink} \\
        & x^{\mathtt{tot}}[ t]  \le {X^{\max }}\label{eq:Mainl} \\
        &f_k^{\mathtt{tot}}\left[ {{t}} \right] \le F_k^{\max },\,\forall k\label{eq:Mainm} \\
        &f_\mathtt{c}^{\mathtt{tot}}\left[ {{t}} \right] \le F_\mathtt{c}^{\max }\label{eq:Mainn} \\
        & {\varphi _{m,s}}\left[ {{t_j}} \right] + \sum\limits_{k \in \mathcal{K}} {\sigma _{m,s}^k\left[ t \right]} \left( {1 - {\varphi _{m,s}}\left[ {{t_j}} \right]} \right) = 1,        \forall m\label{eq:Maini} \qquad\,\\
        &{\boldsymbol{\delta }}\left[ t \right] \in \boldsymbol{\Sigma}[t], {\boldsymbol{\varphi }}\left[ {{t_j}} \right] \in \boldsymbol{\Psi} \left[ {{t_j}} \right],  \boldsymbol{b}\left[ {{t_j}} \right] \in {\mathscr {B}}\left[ {{t_j}} \right]\label{eq:Maino} 
	\end{IEEEeqnarray}
\end{subequations}
where the constraint sets $\boldsymbol{\Sigma}[t]$, $\boldsymbol{\Psi} \left[ {{t_j}} \right]$ and ${\mathscr {B}}\left[ {{t_j}} \right]$ are defined as follows:
\begin{subequations} \label{setSigma}
	\begin{IEEEeqnarray}{cl}
\boldsymbol{\Sigma}[t]\triangleq&\Big\{{\boldsymbol{\delta }}\left[ t \right] \Bigl|\, {\boldsymbol{\delta }}\left[ t \right] \in  \{0,1\},\,\label{setSigma_a}\\
&\, \sum_{k \in \mathcal{K}} {\sigma _{m,s}^k\left[ {{t}} \right] \le 1},\,\forall m, \label{setSigma_b}\\
        &\,\sum_{k' \in \mathcal{K} \setminus \{k\}} {\sigma _{m,s}^{k,k'}\left[ t \right] + \sigma _{m,s}^{k,\mathtt{c}}\left[ t \right]}  \le 1,\forall k,m, \label{setSigma_c}\quad\\
        &\,\sum_{k' \in \mathcal{K} \setminus \{k\}} {\sigma _{m,s}^{k,k'}\left[ t \right] \le 1} ,\,\forall m,k,\label{setSigma_d}\\
        &\, 1 \le \sum_{s \in \mathcal{S}} {g_s^k} \left[ t \right] \le S_k^{\max},\,\forall k \Big\}, \label{setSigma_e}
\end{IEEEeqnarray}\vspace{-10pt}
\end{subequations}        
\begin{align}        
\boldsymbol{\Psi} [t_j] \triangleq&\Big\{\boldsymbol{\varphi} \left[ {{t_{j}}} \right]\Bigl|\,  0 \le  \varphi _{m,s}[t_j] \le 1,\,\forall m\Big\}, \label{setPsi}\\
{\mathscr{B}}\left[ {{t_j}} \right] \triangleq&\Big\{\boldsymbol{b}\left[ {{t_{j}}} \right] \Bigl|\, \sum_{m \in \mathcal{M}} {{b_m}\left[ {{t_{j}}} \right]}  \le 1\, \Big\}.\label{setB}
\end{align}
In problem \eqref{eq:Main}, constraints \eqref{eq:Mainb} and \eqref{eq:Mainc} define the maximum end-to-end latency and the user's energy consumption requirements, respectively. Constraint \eqref{eq:Maink} specifies the minimum data rate for UE $m$, while constraint \eqref{eq:Mainl} ensures that the total monetary system cost remains within the allowable threshold. Constraints \eqref{eq:Mainm} and \eqref{eq:Mainn} regulate the system's computation budget. Finally, the constraint set $\boldsymbol{\Sigma}[t]$ in \eqref{setSigma} governs user association, server cooperation, and service placement decisions.

\textit{Challenges of solving problem \eqref{eq:Main}:}
The objective \eqref{eq:Maina} is inherently nonconvex, and constraints \eqref{eq:Mainb}-\eqref{eq:Maini} are also nonconvex. Consequently, problem \eqref{eq:Main} can be classified as a nonconvex MINLP, where even identifying a feasible point poses a significant challenge. Moreover, the strong coupling between binary variables ($\boldsymbol{\delta}[t]$) and continuous variables ($\boldsymbol{\varphi}[t_j], \boldsymbol{b}[t_j]$) complicates the direct application of SCA method.

\subsection{Equivalent Transformation of Problem \eqref{eq:Main}}\label{sec_Transformation}
To develop an effective solution for problem \eqref{eq:Main}, we first reformulate it into an equivalent but more tractable form. To achieve this, we introduce the following lemmas, which aim to alleviate the strong coupling between the optimization variables.
\begin{lemma}[Service Availability]\label{LM_lemma1} If service $s$ is not installed on ES $k$, the associated tasks must be offloaded to a neighboring ES $k'$ or the cloud to ensure proper execution while satisfying resource, latency, and data rate constraints. This condition is mathematically expressed as:
\begin{align}\label{lemma1_eq1}
    \sigma _{m,s}^k[t] - \Big(\sum_{k' \in \mathcal{K} \setminus \{k\}} {\sigma _{m,s}^{k,k'}[t] + \sigma _{m,s}^{k,\mathtt{c}}[t]}\Big) \le g_s^k[t].
\end{align}
\end{lemma}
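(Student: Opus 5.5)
The plan is to treat \eqref{lemma1_eq1} as a linearization of the logical implication ``$\sigma_{m,s}^k[t]=1$ and $g_s^k[t]=0$ $\Rightarrow$ the task is forwarded to some $k'\neq k$ or to the cloud.'' I would prove it by exhaustive case analysis over the binary variables admissible in $\boldsymbol{\Sigma}[t]$. Write $c_{m,s}^k[t] \triangleq \sum_{k'\in\mathcal{K}\setminus\{k\}}\sigma_{m,s}^{k,k'}[t]+\sigma_{m,s}^{k,\mathtt{c}}[t]$. By \eqref{setSigma_a} and \eqref{setSigma_c}, $c_{m,s}^k[t]\in\{0,1\}$. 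Hence the left-hand side of \eqref{lemma1_eq1}, namely $\sigma_{m,s}^k[t]-c_{m,s}^k[t]$, takes values in $\{-1,0,1\}$, and the inequality can only fail when it equals $1$ while $g_s^k[t]=0$.

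\textbf{Step 1 (necessity).} I will show that every point that is feasible and physically meaningful for \eqref{eq:Main} satisfies \eqref{lemma1_eq1}. If $\sigma_{m,s}^k[t]=0$, the left side is at most $0\le g_s^k[t]$. If $g_s^k[t]=1$, the right side is $1$, which dominates. The remaining case is $\sigma_{m,s}^k[t]=1$, $g_s^k[t]=0$, $c_{m,s}^k[t]=0$. Here the ``processing at associated ES $k$'' indicator in \eqref{T_tot_cp} vanishes because of the factor $g_s^k[t]$. The ES-$k'$ and cloud indicators also vanish because $c_{m,s}^k[t]=0$. So the offloaded portion $(1-\varphi_{m,s}[t_j])$ is assigned to no processor, and it also contributes nothing to $f_k^{\mathtt{tot}}[t]$ in \eqref{eq_fk}.

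To exclude this case, I would argue as follows. If $\varphi_{m,s}[t_j]<1$, the offloaded portion is never executed, which contradicts correct execution of task $\Gamma_{m,s}$. If instead $\varphi_{m,s}[t_j]=1$, then \eqref{eq:Maini} forces $\sum_k\sigma_{m,s}^k[t]=0$, contradicting $\sigma_{m,s}^k[t]=1$. Therefore $c_{m,s}^k[t]\ge 1$, and \eqref{lemma1_eq1} holds.

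\textbf{Step 2 (sufficiency / equivalence).} Conversely, I will check that adding \eqref{lemma1_eq1} to $\boldsymbol{\Sigma}[t]$ removes exactly that degenerate configuration and nothing else, since in all other cases the inequality is slack. The feasible set of meaningful solutions, and hence the optimum of \eqref{eq:Main}, is therefore unchanged. As a by-product, when $g_s^k[t]=0$ the term $g_s^k[t]\sigma_{m,s}^k[t](1-c_{m,s}^k[t])$ in \eqref{eq_fk} and \eqref{T_tot_cp} is automatically consistent, which is what later allows decoupling products.

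\textbf{Main obstacle.} The main obstacle is Step 1's degenerate case. The latency expressions in \eqref{T_tot_cp} treat the unexecuted portion as having zero delay, so the literal model does not itself forbid it; it even looks attractive to the minimization. I would therefore make explicit the modelling requirement that every offloaded bit must be processed somewhere. A clean way to state it is as the implicit conservation condition that the three processing indicators in \eqref{T_tot_cp} sum to $\sigma_{m,s}^k[t]$ whenever $\varphi_{m,s}[t_j]<1$. Combined with \eqref{eq:Maini}, this yields the contradiction above.
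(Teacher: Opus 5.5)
Your reduction to the single binary pattern $\sigma_{m,s}^k[t]=1$, $g_s^k[t]=0$, $\sum_{k'\neq k}\sigma_{m,s}^{k,k'}[t]+\sigma_{m,s}^{k,\mathtt{c}}[t]=0$ is correct. So is your remark that the literal delay model would even favour this pattern, and, granted your conservation assumption, the subcase $\varphi_{m,s}[t_j]<1$ is handled correctly.

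The step that fails is the subcase $\varphi_{m,s}[t_j]=1$. There, \eqref{eq:Maini} does not force $\sum_k\sigma_{m,s}^k[t]=0$. The constraint is equivalent to $(1-\varphi_{m,s}[t_j])\bigl(1-\sum_{k\in\mathcal{K}}\sigma_{m,s}^k[t]\bigr)=0$: it forces association when $\varphi_{m,s}[t_j]<1$ and reduces to $1=1$ when $\varphi_{m,s}[t_j]=1$. The configuration with full local execution, $\sigma_{m,s}^k[t]=1$, $g_s^k[t]=0$ and no forwarding is therefore not excluded by any other constraint of \eqref{eq:Main}. It is also physically meaningful: nothing is offloaded, so nothing needs processing. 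Dropping the association does not avoid it either, because \eqref{eq:Maink} with $R_{\min}>0$ forces $\sum_k\sigma_{m,s}^k[t]=1$ for every UE. Hence \eqref{lemma1_eq1} is not a consequence of the model in this subcase. Your Step 2 claim, that it removes only degenerate points and leaves the optimum of \eqref{eq:Main} unchanged, is false in general. The inequality also cuts off these legitimate points, and their replacements can be strictly worse: installing $s$ at $k$, or forwarding, which adds $\varphi$-independent propagation delay in \eqref{T_tot_pro} and possibly the cost $\tau^{\mathtt{r}}$.

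The missing idea is the one the paper's short proof relies on. The paper does not derive \eqref{lemma1_eq1} from the other constraints. It reads $\sigma_{m,s}^k[t]=1$, per its definition in the system model, as saying that the task is actually offloaded to ES $k$. Then $g_s^k[t]=0$ rules out execution at $k$, so the task must be forwarded to some $k'\neq k$ or to the cloud, which is exactly the inequality.

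You can make this precise in your framework through the two-timescale structure. $\boldsymbol{\delta}[t]$ is fixed for all $J$ short-term slots before the $\varphi_{m,s}[t_j]$ are chosen, so the long-term configuration must be able to serve slots with $\varphi_{m,s}[t_j]<1$, and there your conservation argument applies. Equivalently, present \eqref{lemma1_eq1} as an imposed modelling requirement rather than a derived fact, and restrict your equivalence claim to $\varphi_{m,s}[t_j]<1$.
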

\begin{proof}
The proof follows directly from the observation that if $g_s^k[t]=0$, service $s$ is not available on ES $k$ and, therefore, cannot be executed locally. When $\sigma _{m,s}^k[t]=1$, indicating that the task is offloaded to ES $k$ and it must be transferred to a neighboring ES $k'$ or the cloud, satisfying: $\sigma _{m,s}^k[t]   \le \sum_{k' \in \mathcal{K} \setminus \{k\}} {\sigma _{m,s}^{k,k'}[t] + \sigma _{m,s}^{k,\mathtt{c}}[t]} +  g_s^k[t].$
\end{proof}

\begin{lemma}[Connectivity-Dependent Offloading]\label{LM_lemma2} 
If there is no radio link connection between UE $m$ and ES $k$, tasks associated with service $s$ from this user must be executed locally or remain unprocessed. Mathematically, this condition can be expressed as:
\begin{subequations} \label{lemma2}
\begin{IEEEeqnarray}{rcl}
  \sum_{k' \in \mathcal{K} \setminus \{k\}} {\sigma _{m,s}^{k,k'}\left[ t \right] + \sigma _{m,s}^{k,\mathtt{c}}\left[ t \right]} &\le \sigma _{m,s}^k\left[ t \right],\, \forall m,k\label{lemma2_eq1}\\
    \sum_{k' \in \mathcal{K} \setminus \{k\}} {\sigma _{m,s}^{k,k'}\left[ t \right]\,} 
    &     \le \sigma _{m,s}^k\left[ t \right],\,\forall m,k\label{lemma2_eq3}\\
    \sigma _{m,s}^{k,\mathtt{c}}\left[ t \right] &\le \sigma _{m,s}^k\left[ t \right],\,\forall m,k.\label{lemma2_eq4}
\end{IEEEeqnarray}
\end{subequations}
\end{lemma}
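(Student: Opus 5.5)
The plan is to argue exactly as in Lemma~\ref{LM_lemma1}: the three inequalities in \eqref{lemma2} are logical implications between binary variables, and each can be checked by cases on $\sigma_{m,s}^k[t]\in\{0,1\}$. The modelling premise is that ES $k$ can forward a task for service $s$ of UE $m$ (to some $k'\neq k$ or to the cloud) only if it actually received that task, i.e., only if $\sigma_{m,s}^k[t]=1$. This is already implicit in \eqref{eq_fk}, \eqref{fc_total}, \eqref{eq:T_kc_ts}, \eqref{eq:T_kk_ts}, \eqref{T_tot_pro} and \eqref{T_tot_cp}: there the cooperation variables $\sigma_{m,s}^{k,k'}[t]$ and $\sigma_{m,s}^{k,\mathtt{c}}[t]$ always appear multiplied by $\sigma_{m,s}^k[t]$. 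The lemma makes this dependence explicit as linear constraints, so that the products can later be dropped.

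First, consider $\sigma_{m,s}^k[t]=0$, meaning there is no link between UE $m$ and ES $k$. Then ES $k$ holds no task of UE $m$ for service $s$, so forwarding from $k$ is meaningless. We must set $\sigma_{m,s}^{k,k'}[t]=0$ for all $k'\neq k$ and $\sigma_{m,s}^{k,\mathtt{c}}[t]=0$, since otherwise \eqref{eq_fk} and \eqref{fc_total} would reserve resources $f_{m'k}$ or $f_{m\mathtt{c}}$ and charge the request cost $\tau^\mathtt{r}$ in \eqref{eq:request_cost} for a nonexistent task. Such a choice is either infeasible or strictly dominated in the objective, because the cost $x^{\mathtt{tot}}[t]$ grows with $\sigma_{m,s}^{k,\mathtt{c}}[t]$. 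Hence the left-hand sides of \eqref{lemma2_eq1}, \eqref{lemma2_eq3} and \eqref{lemma2_eq4} are zero, and all three inequalities hold with right-hand side $0$. In this case \eqref{eq:Maini} forces $\varphi_{m,s}[t_j]=1$ whenever UE $m$ has no association at all, which gives the ``executed locally or unprocessed'' interpretation in the statement.

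Second, consider $\sigma_{m,s}^k[t]=1$. Then \eqref{lemma2_eq1} is exactly constraint \eqref{setSigma_c} of $\boldsymbol{\Sigma}[t]$, and \eqref{lemma2_eq3} is exactly \eqref{setSigma_d}. Inequality \eqref{lemma2_eq4} holds trivially because $\sigma_{m,s}^{k,\mathtt{c}}[t]\le 1$. Finally, I will note that \eqref{lemma2_eq3} and \eqref{lemma2_eq4} both follow from \eqref{lemma2_eq1} by nonnegativity of the binaries, so they are redundant but valid cuts that tighten the later continuous relaxation.

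The only real obstacle is justifying the premise in the $\sigma_{m,s}^k[t]=0$ case. The original formulation does not forbid $\sigma_{m,s}^{k,\mathtt{c}}[t]=1$ when $\sigma_{m,s}^k[t]=0$: the request cost in \eqref{eq:request_cost} would still be incurred, while the other terms are annihilated by the product. I would therefore argue optimality-equivalence rather than literal equivalence. Take any optimal solution of \eqref{eq:Main} that violates \eqref{lemma2}, and zero out the offending cooperation variables. Every latency and resource expression is unchanged, the cost weakly decreases, and the constraints of $\boldsymbol{\Sigma}[t]$ remain satisfied. So adding \eqref{lemma2} loses no optimal solution.
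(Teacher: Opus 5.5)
Your proof is correct and follows the paper's route. The paper's entire proof is your $\sigma_{m,s}^k[t]=0$ case, stated as a modelling fact. Your $\sigma_{m,s}^k[t]=1$ check and zero-out argument add what the paper leaves implicit: because the request cost in \eqref{eq:request_cost} carries no $\sigma_{m,s}^k[t]$ factor, \eqref{lemma2} is not implied by the constraints of \eqref{eq:Main}, but it can be imposed without changing the optimal value.

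Two side remarks need fixing. First, in \eqref{eq_fk} and \eqref{fc_total} dangling cooperation variables are annihilated by the association factors. They therefore reserve no resources, and a dangling $\sigma_{m,s}^{k,k'}[t]$ is objective-neutral rather than infeasible or strictly dominated; consequently your final step preserves the optimal value, not every optimal solution. Second, \eqref{lemma2_eq3} and \eqref{lemma2_eq4} follow from \eqref{lemma2_eq1} and $\boldsymbol{\delta}[t]\ge 0$ even for relaxed variables, so they do not tighten the continuous relaxation.
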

\begin{proof}
The proof is straightforward because if $\sigma _{m,s}^k\left[ t \right] = 0$, then: $\sum_{k' \in \mathcal{K} \setminus \{k\}} {\sigma _{m,s}^{k,k'}\left[ t \right] + \sigma _{m,s}^{k,\mathtt{c}}\left[ t \right]} = 0$, $\sum_{k' \in \mathcal{K} \setminus \{k\}} {\sigma _{m,s}^{k,k'}\left[ t \right]}  = 0$ and  $\sigma _{m,s}^{k,\mathtt{c}}\left[ t \right]=0$.
\end{proof}

\begin{lemma}[Task Transfer Condition]\label{LM_lemma3}  ES $k$ can only transfer a task for service $s$ to a neighboring ES $k'$ if the latter has already installed the service, as shown by
\begin{subequations} \label{lemma3}
\begin{IEEEeqnarray}{rcl}
    \sigma _{m,s}^{k',k}\left[ t \right] &\le g_s^k\left[ t \right], \forall m, k \label{lemma3.1}\\
    \sigma _{m,s}^{k,k'}\left[ t \right] &\le g_s^{k'}\left[ t \right], \forall m, k. \label{lemma3.2}
\end{IEEEeqnarray}
\end{subequations}
\end{lemma}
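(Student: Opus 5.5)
To prove Lemma~\ref{LM_lemma3}, I will argue directly from the binary nature of the variables in $\boldsymbol{\Sigma}[t]$, following the same pattern as the proofs of Lemma~\ref{LM_lemma1} and Lemma~\ref{LM_lemma2}. The idea is that a task for service $s$ can be executed at ES $k'$ only if $g_s^{k'}[t]=1$. This requirement is already implicit in the system model. In \eqref{T_tot_cp} and \eqref{T_tot_pro}, the processing and propagation terms at a neighboring ES carry the factor $\sigma_{m,s}^{k,k'}[t]g_s^{k'}[t]$. In \eqref{eq_fk}, the computational load forwarded to ES $k$ carries the factor $g_s^k[t]\sigma_{m,s}^{k',k}[t]$. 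Lemma~\ref{LM_lemma3} makes this requirement explicit as a linear constraint.

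\textbf{Case analysis on \eqref{lemma3.2}.} The plan is to split on the value of $g_s^{k'}[t]\in\{0,1\}$.
\begin{itemize}
\item If $g_s^{k'}[t]=1$, the inequality $\sigma_{m,s}^{k,k'}[t]\le 1$ holds trivially, because $\sigma_{m,s}^{k,k'}[t]\in\{0,1\}$ by \eqref{setSigma_a}.
\item If $g_s^{k'}[t]=0$, service $s$ is absent at ES $k'$. A transfer $\sigma_{m,s}^{k,k'}[t]=1$ would then produce a task that is counted in the forwarding delay \eqref{eq:T_kk_ts} but never processed: its processing term in \eqref{T_tot_cp} is switched off by $g_s^{k'}[t]=0$, and ES $k'$ allocates no resources to it in \eqref{eq_fk}. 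Such a task is not correctly executed, so feasibility forces $\sigma_{m,s}^{k,k'}[t]=0$, which is exactly $\sigma_{m,s}^{k,k'}[t]\le g_s^{k'}[t]$.
\end{itemize}

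\textbf{Obtaining \eqref{lemma3.1}.} This follows by exchanging the roles of $k$ and $k'$, which is allowed because both indices range over all of $\mathcal{K}$.

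\textbf{Combining with the earlier lemmas.} Lemma~\ref{LM_lemma2} and \eqref{setSigma_c} ensure that at most one destination is chosen. Together with Lemma~\ref{LM_lemma1}, the inequalities above then guarantee that every offloaded task is routed to a server that hosts service $s$: either ES $k$ itself, a neighbor $k'$ with $g_s^{k'}[t]=1$, or the cloud.

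\textbf{Main obstacle: showing no optimal solution is lost.} The delicate step is arguing that adding \eqref{lemma3} does not remove any solution that is meaningful under the original formulation \eqref{eq:Main}. The concern is a configuration with $\sigma_{m,s}^{k,k'}[t]=1$ and $g_s^{k'}[t]=0$, which might look cheap because its processing term in \eqref{T_tot_cp} vanishes. I will rule this out by observing that such a point does not satisfy the execution requirement implied by \eqref{eq:Maini}, under which the offloaded portion $(1-\varphi_{m,s}[t_j])$ must actually be processed. These points are therefore spurious and excluding them keeps the two formulations equivalent. The interpretation should be stated explicitly, since the original product form does not enforce it by itself.
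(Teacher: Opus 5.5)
Your core argument matches the paper's. The paper's entire proof is that the lemma ``follows directly from Lemma~\ref{LM_lemma1}''. Read sensibly, that means applying the service-availability principle at the receiving ES, which is exactly your case split on $g_s^{k'}[t]\in\{0,1\}$. Obtaining \eqref{lemma3.1} from \eqref{lemma3.2} by swapping $k$ and $k'$ is also fine. In both your version and the paper's, this is a modeling argument rather than a derivation from the inequality \eqref{lemma1_eq1}. Indeed, the point $\sigma_{m,s}^k[t]=\sigma_{m,s}^{k,k'}[t]=1$, $\sigma_{m,s}^{k,\mathtt{c}}[t]=0$, $g_s^{k'}[t]=0$ satisfies \eqref{lemma1_eq1} and \eqref{lemma2} but violates \eqref{lemma3.2}.

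The step that fails is your resolution of the ``main obstacle''. Constraint \eqref{eq:Maini} only states $\varphi_{m,s}[t_j]+\sum_k\sigma_{m,s}^k[t](1-\varphi_{m,s}[t_j])=1$, i.e., that the offloaded portion is sent to some associated ES. It says nothing about where that portion is executed, and the point above satisfies it. In fact that point satisfies every constraint of \eqref{eq:Main} as written: by \eqref{setSigma_e}, ES $k'$ must host some service, but not necessarily $s$.

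The point is also attractive to the optimizer. Contrary to what you wrote, the forwarded task is not counted in \eqref{eq:T_kk_ts} either, since that expression also carries the factor $g_s^{k'}[t]$. So the forwarding, propagation and processing delays of the offloaded portion, and its load on ES $k'$, all vanish at once, and service $s$ need not be installed or operated at $k'$. Hence imposing \eqref{lemma3} genuinely removes points from the literal feasible set, and no argument from the constraints of \eqref{eq:Main} can show that no optimal solution is lost.

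The right way to close this is what the paper does implicitly. State \eqref{lemma3} as an explicit modeling requirement: a task may only be executed where service $s$ is installed. This excludes exactly these non-physical points, and it is what later licenses dropping $g_s^{k'}[t]$ in \eqref{T_kk_ts_reform} and \eqref{T_tot_cp_reform}.
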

\begin{proof}The proof of Lemma \ref{LM_lemma3} follows directly from Lemma \ref{LM_lemma1}.\end{proof}

\begin{proposition}\label{pro1}
From Lemmas \ref{LM_lemma1}-\ref{LM_lemma3}, the complex functions in \eqref{eq_fk}, \eqref{fc_total}, \eqref{eq:T_kc_ts}, \eqref{eq:T_kk_ts}, \eqref{T_tot_pro} and \eqref{T_tot_cp} can be simplified  as: 
\begin{subequations} \label{eq_pro1}
\begin{IEEEeqnarray}{rcl}
&&  f_k^{\mathtt{tot}}[t] =\sum_{m \in\mathcal{M}} \Big(\big(\sigma_{m,s}^k[t] -\sum_{k'\in\mathcal{K}\setminus \{k\}} \sigma_{m,s}^{k,k'}[t]   \nonumber \\
&& \qquad\quad
- \sigma_{m,s}^{k,\mathtt{c}}[t]\big)f_{mk}\Big)  + \sum_{k'\neq k} \sum_{m'\neq m} \big(\sigma_{m,s}^{k',k}[t] f_{m'k} \big)\label{reform_fk} \\
&&f_\mathtt{c}^{\mathtt{tot}}\left[ {{t}} \right] =  \sum_{k \in \mathcal{K}} {\sum_{m \in \mathcal{M}} { {\sigma _{m,s}^{k,\mathtt{c}}\left[ t \right]{f_{m\mathtt{c}}}} } }\label{fc_total_reform}\\  
&&T_{k,s}^{\mathtt{c}}\left[ {{t_{j}}} \right] = \frac{{\sum_{m \in \mathcal{M}} \sigma_{m,s}^{k,\mathtt{c}}[t]\left( {1 - \varphi _{m,s}\left[ {{t_{j}}} \right]} \right){a_{m,s}}\left[ {{t_j}} \right]}}{{R_k^\mathtt{c}\left[ {{t_{j}}} \right]}}\label{T_kc_ts_reform}\\
&&T_{k,s}^{k'}[t_j] = \frac{\sum\limits_{m \in \mathcal{M}}  \sum\limits_{k' \in \mathcal{K} \setminus \{k\}}  \sigma_{m,s}^{k,k'}[t]
 \left(1 - \varphi _{m,s}[t_j] \right) a_{m,s}[t_j] }{R_k^{k'}[t_j]}\label{T_kk_ts_reform}\quad\\
&&    T_{m,s}^{\mathtt{tot,pro}}\left[ {{t_j}} \right] =  \mathop {\max }\limits_{\forall k} \bigg\{ {\sigma _{m,s}^{k,\mathtt{c}}\left[ t \right]T_{k}^{\mathtt{c,pro}}\left[ {{t_j}} \right]} \nonumber\\
&&\qquad\qquad\quad\, +  {\sum_{k' \in \mathcal{K} \setminus \{k\}} {\sigma _{m,s}^{k,k'}\left[ t \right]T_{k}^{k',\mathtt{pro}}\left[ {{t_j}} \right]} } \bigg\}    \label{T_total__pro_reform}\\
&&T_{m,s}^{\mathtt{tot,cp}}\left[ {{t_j}} \right] =   T_{m,s}^{\mathtt{ue,cp}}\left[ {{t_j}} \right] + \max_{\forall k} \biggr\{ T_{m,s}^{k,\mathtt{cp}}\left[ {{t_j}} \right] \big( \sigma _{m,s}^k[t]\nonumber \\
&&\quad  - \sum_{k' \in \mathcal{K} \setminus \{k\}} {\sigma _{m,s}^{k,k'}\left[ t \right]}  - \sigma _{m,s}^{k,\mathtt{c}}\left[ t \right] \big) \nonumber\\
&&\,  + \sum_{k' \in \mathcal{K} \setminus \{k\}} {\sigma _{m,s}^{k,k'}\left[ t \right]} T_{m,s}^{k',\mathtt{cp}}\left[ {{t_j}} \right] + \sigma_{m,s}^{k,\mathtt{c}}\left[ t \right]T_{m,s}^{\mathtt{c,cp}}\left[ {{t_j}} \right] \biggr\}.\qquad \label{T_tot_cp_reform}
\end{IEEEeqnarray}
\end{subequations} 
\end{proposition}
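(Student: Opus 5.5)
The plan is to verify each identity in \eqref{eq_pro1} term by term. Every product of binary variables in \eqref{eq_fk}, \eqref{fc_total}, \eqref{eq:T_kc_ts}, \eqref{eq:T_kk_ts}, \eqref{T_tot_pro} and \eqref{T_tot_cp} collapses to a single variable or a linear combination once the implications of Lemmas~\ref{LM_lemma1}--\ref{LM_lemma3} are imposed. Since all variables in $\boldsymbol{\delta}[t]$ are binary, it suffices to show that the original and simplified expressions agree on every binary assignment feasible for $\boldsymbol{\Sigma}[t]$ together with \eqref{lemma1_eq1}, \eqref{lemma2} and \eqref{lemma3}. Each check will be done by a short case split on the ``parent'' variable.

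\textbf{Step 1: products with $\sigma_{m,s}^k[t]$.} By \eqref{lemma2_eq3} and \eqref{lemma2_eq4}, $\sigma_{m,s}^{k,k'}[t]\le\sigma_{m,s}^k[t]$ and $\sigma_{m,s}^{k,\mathtt{c}}[t]\le\sigma_{m,s}^k[t]$. For binaries this gives
\begin{align*}
\sigma_{m,s}^k[t]\sigma_{m,s}^{k,k'}[t]=\sigma_{m,s}^{k,k'}[t],\qquad \sigma_{m,s}^k[t]\sigma_{m,s}^{k,\mathtt{c}}[t]=\sigma_{m,s}^{k,\mathtt{c}}[t].
\end{align*}
The case check is simple: if $\sigma_{m,s}^k=0$, both sides vanish; if $\sigma_{m,s}^k=1$, the product equals the other factor. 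This immediately yields \eqref{fc_total_reform} and \eqref{T_kc_ts_reform}. It also removes the leading $\sigma_{m,s}^k$ factor in the cloud and neighbor terms of \eqref{T_tot_pro} and \eqref{T_tot_cp}.

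\textbf{Step 2: products with $g_s^{k'}[t]$ and $g_s^k[t]$.} By \eqref{lemma3.2}, $\sigma_{m,s}^{k,k'}[t]g_s^{k'}[t]=\sigma_{m,s}^{k,k'}[t]$. Combined with Step 1, this gives \eqref{T_kk_ts_reform}, the neighbor term in \eqref{T_total__pro_reform}, and the ES-$k'$ processing term in \eqref{T_tot_cp_reform}. Similarly, \eqref{lemma3.1} gives $g_s^k[t]\sigma_{m,s}^{k',k}[t]=\sigma_{m,s}^{k',k}[t]$. Applying Step 1 to the pair $(m',k')$ then reduces the second sum of \eqref{eq_fk} to $\sum\sigma_{m,s}^{k',k}[t]f_{m'k}$, matching the index convention of \eqref{reform_fk}.

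\textbf{Step 3, the main obstacle: the ``processing at ES $k$'' factor.} This factor appears in both \eqref{eq_fk} and \eqref{T_tot_cp}:
\begin{align*}
g_s^k[t]\sigma_{m,s}^k[t]\Big(1-\sum_{k'\neq k}\sigma_{m,s}^{k,k'}[t]-\sigma_{m,s}^{k,\mathtt{c}}[t]\Big).
\end{align*}
Let $u\triangleq\sum_{k'\neq k}\sigma_{m,s}^{k,k'}[t]+\sigma_{m,s}^{k,\mathtt{c}}[t]$. By \eqref{setSigma_c}, $u\in\{0,1\}$, and by \eqref{lemma2_eq1}, $u\le\sigma_{m,s}^k[t]$. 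Hence $\sigma_{m,s}^k(1-u)=\sigma_{m,s}^k-u$, which uses Step 1 again since $\sigma_{m,s}^k u=u$. It remains to drop $g_s^k[t]$, i.e., to show $g_s^k(\sigma_{m,s}^k-u)=\sigma_{m,s}^k-u$. The quantity $\sigma_{m,s}^k-u$ is binary and nonnegative, and \eqref{lemma1_eq1} says precisely that $\sigma_{m,s}^k-u\le g_s^k$. So whenever it equals $1$ we have $g_s^k=1$, and whenever it equals $0$ both sides vanish. This gives the first terms of \eqref{reform_fk} and \eqref{T_tot_cp_reform}. The delicate point is confirming that $u$ really is binary, which rests on \eqref{setSigma_c}, before invoking \eqref{lemma1_eq1}. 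Once Steps 1--3 are substituted inside the maxima and sums, all six identities of \eqref{eq_pro1} follow.
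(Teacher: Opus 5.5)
Your proposal is correct and follows essentially the same route as the paper's appendix. You collapse each binary product $AB$ to $B$ whenever $B\le A$ using \eqref{lemma2_eq3}, \eqref{lemma2_eq4} and \eqref{lemma3}, and then drop $g_s^k[t]$ from the ``processing at ES $k$'' factor via \eqref{lemma1_eq1}. One caveat on the second sum of \eqref{eq_fk}: like the paper's \eqref{Y2}, you tacitly read the product as $\sigma_{m',s}^{k'}[t]\sigma_{m',s}^{k',k}[t]$ with a common user index. As literally written, with $\sigma_{m,s}^{k',k}[t]$ and $m'\neq m$, no lemma collapses it, so what you actually obtain is $\sum_{k'\neq k}\sum_{m'}\sigma_{m',s}^{k',k}[t]f_{m'k}$ rather than the literal index form in \eqref{reform_fk}.
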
  
\begin{proof}Please see Appendix \ref{app:DerivationofInequ}.\end{proof}

\section{Proposed Solutions}\label{sec_Solutions}
We now propose a two-timescale optimization approach to decouple binary and continuous variables, improving solution efficiency.

\subsection{Long-term and Short-term Subproblems}\label{sec_subProblem}
As outlined earlier, our objective is to address problem \eqref{eq:Main} across different timescales. Specifically, the binary decision variables, $\boldsymbol{\delta }$, are executed during the long-term timeslot $t$ to maintain system stability in configurations and connections. In contrast, the continuous decision variables, $({\boldsymbol{\varphi }},{\boldsymbol{b}})$, are executed during the short-term timeslot $t_j$, effectively managing network uncertainties and ensuring the required QoS standards are met.

\textit{1) Long-term subproblem (L-SP):} During the long-term timeslot $t$, the focus is on optimizing service placement and edge/cloud cooperation to minimize system costs while satisfying the total e2e latency requirements for offloaded tasks. From \eqref{eq:Main} and Lemmas \ref{LM_lemma1}-\ref{LM_lemma3}, the long-term subproblem with respect to (w.r.t.) $\boldsymbol{\delta }$ can be expressed as follows:
 \begin{subequations} \label{eq:long-term}
    \begin{IEEEeqnarray}{cl}
         \textbf{L-SP: } &\mathop {\min }\limits_{\boldsymbol{\delta }[t]}\, \eta_L \left( {{\boldsymbol{\delta }}\left[ t \right]} \right)\label{eq:long-terma} \\
        &\st\quad \eqref{eq:Mainb}\text{-}\eqref{eq:Maini}, \eqref{lemma1_eq1}, \eqref{lemma2}, \eqref{lemma3}\label{eq:long-termb}\\ 
        &\qquad\ \boldsymbol{\delta}\left[ t \right] \in \boldsymbol{\Sigma}'[t]\label{eq:long-termd} 
    \end{IEEEeqnarray}
\end{subequations}
where $\boldsymbol{\Sigma}'[t]\triangleq\big\{{\boldsymbol{\delta }}\left[ t \right] \Bigl|\,\eqref{setSigma_a}, \eqref{setSigma_b}, \eqref{setSigma_d}, \eqref{setSigma_e}\big\}$. We note that $\eqref{setSigma_c}$ is  equivalently replaced by \eqref{lemma2}.

\textit{2) Short-term subproblem (S-SP):} The optimal solutions $\boldsymbol{\delta}^*$ found in \eqref{eq:long-term} is then used to solve the following short-term subproblem: 
\begin{subequations} \label{eq:short-term}
	\begin{IEEEeqnarray}{cl}
		\textbf{S-SP:}\,&\mathop {\min }\limits_{{\boldsymbol{\varphi }}[t_j],{\boldsymbol{b}}[t_j]}\ \eta_S\left( {{\boldsymbol{\varphi }}\left[ {{t_j}} \right], {{\boldsymbol{b}}}\left[ {{t_j}} \right]} \right)\label{eq:short-terma} \\
		& \st\quad \eqref{eq:Mainb}\text{-}\eqref{eq:Maink}, \eqref{eq:Maini}    \label{eq:short-termb} \\
        &\qquad\, {\boldsymbol{\varphi }}\left[ {{t_j}} \right] \in \boldsymbol{\Psi} \left[ {{t_j}} \right],  \boldsymbol{b}\left[ {{t_j}} \right] \in {\mathscr {B}}\left[ {{t_j}} \right]\label{eq:short-termc}.
	\end{IEEEeqnarray}
\end{subequations}

\begin{algorithm}[t]
\small
    \setstretch{0.95}  
    \begin{algorithmic}[1]
        \protect\caption{Overall Algorithm for Solving Problem \eqref{eq:Main}} \label{alg_1}
        \global\long\def\algorithmicrequire{\textbf{Initialization:}}    
        \REQUIRE
         Set $t = 0, j = 0$ and generate initial points (${\boldsymbol{\delta }}[0], {\boldsymbol{\varphi }}[{0}],{\boldsymbol{b}}[{0}])$ for problem \eqref{eq:Main};
        \STATE Set \textit{long-term flag} = $\mathtt{TRUE}$;
        \global\long\def\algorithmicrequire{\textbf{Main Loop:}}
        \REQUIRE 
        \FOR{each frame $t = \{1,2,\cdots,T\}$}
            \IF{\textit{long-term flag} is $\mathtt{TRUE}$}
                \STATE Solve \textbf{L-SP} \eqref{eq:long-term} to find ${{\boldsymbol{\delta }}^*}[t]$;
                \STATE Set \textit{long-term flag} = $\mathtt{FALSE}$;
            \ENDIF
            \FOR{each time-slot ${t_j} = \{{t_1},{t_2},\cdots,{t_J}\}$}
                \STATE Given $\boldsymbol{\delta }$, solve \textbf{S-SP} \eqref{eq:short-term} to find (${{\boldsymbol{\varphi }}^*[t_j]},$ ${{\bf{b}}^*[t_j]}$);
                \IF{new services are requested, or latency \\  requirements are not met} 
                    \STATE Set \textit{long-term flag} = $\mathtt{TRUE}$;
                    \STATE \textbf{break}
                \ENDIF
                \STATE Set $j = j+1$;
            \ENDFOR
            \STATE Set $t = t + 1$;
        \ENDFOR
    \end{algorithmic}
\end{algorithm}
The overall service placement and resource allocation algorithm for solving problem \eqref{eq:Main} is summarized in Algorithm \ref{alg_1} while the solutions for subproblems \eqref{eq:long-term} and \eqref{eq:short-term} will be detailed next.

\begin{remark}
    In the worst-case scenario where new services frequently arrive or latency constraints are repeatedly violated due to poor wireless channel conditions and heavy workload dynamics, Algorithm~\ref{alg_1} may trigger long-term re-optimization more frequently. In such situations, task scheduling and admission control techniques can be incorporated to regulate system load and limit the number of admitted tasks, thereby mitigating excessive re-optimization triggers. This helps prevent the framework from degenerating into an event-driven single-timescale scheme. We leave the detailed investigation of this scenario for future work.
\end{remark}

\subsection{The Proposed Solution for Solving \textbf{L-SP} \eqref{eq:long-term}}\label{sec_Solutions_longterm}
Problem \eqref{eq:long-term} is inherently a mixed-integer linear programming, and while the branch-and-bound (BnB) method ensures optimality, its complexity is prohibitive. To overcome this, we propose an efficient iterative solution leveraging SCA and penalty methods.

\textbf{Penalty function}: To address the binary nature of problem \eqref{eq:long-term}, we relax $\boldsymbol{\delta}[t]$ to continuous variables, allowing $\boldsymbol{\delta}[t] \in [0, 1]$. However, this relaxation introduces uncertainty in the optimization problem, potentially yielding solutions that are infeasible for \eqref{eq:long-term}. For any continuous variable $\delta \in (0, 1)$, we observe that $p(\delta) = \delta - \delta^2 > 0$, while $p(\delta) = \delta - \delta^2 = 0$ when $\delta \in \{0, 1\}$. Two approaches can be used to enforce $\delta = 0$ or $\delta = 1$. First, the constraint $\delta - \delta^2 \leq 0$ can be introduced, which preserves binary solutions for $\delta \in [0,1]$ but may lead to an infeasible or difficult-to-solve optimization problem. Second, a penalty function can be adopted to penalize the uncertainty of binary variables, thereby encouraging the constraint $\delta = \delta^2$ to be satisfied during the optimization process. Instead of directly incorporating these constraints into the relaxed version of \eqref{eq:long-term}, we introduce the following penalty function:
\begin{align}
    \mathcal{P}(\boldsymbol{\delta} \left[ t \right]) &= \sum_{m,k}{p}\left( {\sigma _{m,s}^k\left[ t \right]} \right) +\!\!\! \sum_{m,k,k'}\! {p}\big( {\sigma _{m,s}^{k,k'}\left[ t \right]} \big) \nonumber \\
    &\quad + \sum_{m,k} {p}\left( {\sigma _{m,s}^{k,\mathtt{c}}\left[ t \right]} \right) + \sum_{k} {p}\left( {g_s^k\left[ t \right]} \right) \geq 0.\label{penalty_function}
\end{align}
As a result, the penalized optimization problem of \eqref{eq:long-term} can be expressed as:
 \begin{subequations} \label{eq:long-term_Pen}
    \begin{IEEEeqnarray}{cl}
         \textbf{L-SP: } &\mathop {\min }\limits_{\boldsymbol{\delta }[t]}\, \eta_L \left( {{\boldsymbol{\delta }}\left[ t \right]} \right) + \alpha \mathcal{P}(\boldsymbol{\delta} \left[ t \right])\label{eq:long-term_Pena} \\
        &\st\quad \eqref{eq:Mainb}\text{-}\eqref{eq:Maini}, \eqref{lemma1_eq1}, \eqref{lemma2}, \eqref{lemma3}\label{eq:long-term_Penb}\\ 
        &\qquad\ \boldsymbol{\delta}\left[ t \right] \in \boldsymbol{\Sigma}''[t]\label{eq:long-term_Penc} 
    \end{IEEEeqnarray}
\end{subequations}
where $\boldsymbol{\Sigma}''[t]\triangleq\big\{{\boldsymbol{\delta }}\left[ t \right] \Bigl|\,\boldsymbol{\delta}[t] \in [0, 1], \eqref{setSigma_b}, \eqref{setSigma_d}, \eqref{setSigma_e}\big\}$. Herein, $\alpha$ is a positive penalty parameter to ensure that $\mathcal{P}(\boldsymbol{\delta} \left[ t \right])=0$. When a sufficiently large value of $\alpha$ is selected, the difference $\delta - \delta^2$ in the penalty function $p(\delta)=\delta-\delta^2$ is minimized, which occurs only when $\delta=0$ or $\delta=1$.

Since the penalty function $\mathcal{P}(\boldsymbol{\delta}[t])$ is nonconvex over the domain $0 \leq \boldsymbol{\delta}[t] \leq 1$, we apply the SCA method \cite{AmirBeck2010} to convexify its nonconvexity. Let $p^{(i)}(\delta)$ denote the convex surrogate of $p(\delta)$ constructed at iteration $i$ over the interval $0 \leq \delta \leq 1$. Specifically, we have
\begin{align}\label{eq:penalty_appro}
 p(\delta) =   \delta - {\delta^2} \le \delta - 2\delta{\delta^{\left( i \right)}} + {\big( {{{\delta }^{\left( i \right)}}} \big)^2} \buildrel \Delta \over = p^{(i)}(\delta).
\end{align}
Note that $p(\delta) \le p^{(i)}(\delta),\, \forall \delta^{(i)}\neq \delta^{(i-1)}$, and $p(\delta^{(i)}) = p^{(i)}(\delta^{(i)}),\, \forall \delta^{(i)}= \delta^{(i-1)}$. According to~\cite{Lipp_OptimEng_2016}, the initial value $\delta^{(0)}$ is randomly selected around $0.5$ to avoid bias toward binary assignments (true or false).
As a result, the nonconvex penalty function $\mathcal{P}(\boldsymbol{\delta} \left[ t \right])$ is iteratively replaced by the following convex one:
 \begin{align}\label{eq:Penalty_SCA}
\mathcal{P}(\boldsymbol{\delta} \left[ t \right]) &\le  \sum\limits_{m,k} {p^{\left( i \right)}\left( {\sigma _{m,s}^k\left[ t \right]} \right)} + \sum\limits_{m,k,k'} p^{\left( i \right)}\big( {\sigma _{m,s}^{k,k'}\left[ t \right]} \big) \nonumber\\
 &\quad + \sum\limits_{k,m} p^{\left( i \right)}\big( {\sigma _{m,s}^{k,\mathtt{c}}\left[ t \right]} \big) + \sum_{k} p^{\left( i \right)}\left( {g_s^k\left[ t \right]} \right)  \nonumber\\
&:= \mathcal{P}^{(i)}(\boldsymbol{\delta} \left[ t \right]).
 \end{align}
Given an positive value of $\alpha$, we iteratively solve the following approximate convex program of \eqref{eq:long-term}:
\begin{subequations} \label{eq:SCA-Longterm-all}
    \begin{IEEEeqnarray}{cl}
        & \textbf{L-SP-Convex: } \mathop {\min }\limits_{\boldsymbol{\delta }[t]}\,\, \eta_L \left( {{\boldsymbol{\delta }}\left[ t \right]} \right) + \alpha \mathcal{P}^{(i)}(\boldsymbol{\delta} \left[ t \right]) \label{eq:SCA-Longterm-a} \quad\\
        &\qquad \st\quad \eqref{eq:Mainb}\text{-}\eqref{eq:Maini}, \eqref{lemma1_eq1}, \eqref{lemma2}, \eqref{lemma3},\eqref{eq:long-term_Penc}\label{eq:SCA-Longterm-b} 
    \end{IEEEeqnarray}
\end{subequations}
until convergence and $\mathcal{P}^{(i)}(\boldsymbol{\delta} \left[ t \right])\rightarrow 0$. The proposed SCA-based iterative algorithm for solving \eqref{eq:long-term} is summarized in Algorithm \ref{alg_2}. In our numerical experiments, the relaxed solutions do not always converge to exact binary values and may instead yield near-binary results (\textit{e.g.}, $0.03$ or $0.98$). To obtain valid binary decisions, Step 6 is introduced to map these near-binary values to exact binary solutions.
To ensure feasibility, the recovered binary variables must satisfy constraints \eqref{eq:long-termb} and \eqref{eq:long-termd} \cite{Xu_2023}. Therefore, Steps 7-11 are incorporated to verify and enforce that the recovered binary variables remain feasible and satisfy all constraints of the long-term subproblem \eqref{eq:long-term}.
\begin{algorithm}[t]
\small
\caption{Proposed SCA-based Iterative Algorithm for Solving L-SP \eqref{eq:long-term}}
\label{alg_2}
\begin{algorithmic}[1]
\global\long\def\algorithmicrequire{\textbf{Initialization:}}
\REQUIRE Set $i := 0$ and generate the feasible points   
$\boldsymbol{\delta}^{(0)}\left[t\right]\triangleq\big\{\sigma _{m,s}^{k,(0)}\left[ {{t}} \right], g_s^{k,(0)}\left[ t \right],\sigma _{m,s}^{k,k',(0)}\left[ {{t}} \right],$ $\sigma_{m,s}^{k,\mathtt{c},(0)}\left[ {{t}} \right] \big\}_{\forall m,k}$; Set a positive penalty parameter $\alpha$.
\REPEAT
    \STATE Solve problem \eqref{eq:SCA-Longterm-all} to obtain optimal solutions $\boldsymbol{\delta}^*[t]$;
    \STATE Update $\boldsymbol{\delta}^{(i+1)}[t] := \boldsymbol{\delta}^*[t]$;
    \STATE Set $i := i + 1$;
\UNTIL{Convergence}
\STATE Recover exact binary solutions by:
    $({\sigma}_{m,s}^{k,*}[t]$, 
    ${\sigma}_{m,s}^{k,k',*}[t]$, 
    ${\sigma}_{m,s}^{k,\mathtt{c},*}[t]$, 
    ${g}_s^{k,*}[t]) = 
    (\lfloor{\sigma}_{m,s}^{k,(i)}[t] + 0.5\rfloor$, 
    $\lfloor{\sigma}_{m,s}^{k,k',(i)}[t] + 0.5\rfloor$, 
    $\lfloor{\sigma}_{m,s}^{k,\mathtt{c},(i)}[t] + 0.5\rfloor$, 
    $\lfloor{g}_s^{k,(i)}[t] + 0.5\rfloor)$;
\IF{Constraints \eqref{eq:long-termb} and \eqref{eq:long-termd} are satisfied}
    \STATE Break;
\ELSE
    \STATE Return to Step 1;
\ENDIF   
\STATE \textbf{Output} $({\sigma}_{m,s}^{k,*}[t]$, 
    ${\sigma}_{m,s}^{k,k',*}[t]$, 
    ${\sigma}_{m,s}^{k,\mathtt{c},*}[t]$, 
    ${g}_s^{k,*}[t])_{\forall m,k}$.
\end{algorithmic}
\end{algorithm}

\textit{Choice of the penalty parameter:} An optimal solution to \eqref{eq:SCA-Longterm-all} is guaranteed if an appropriate value of $\alpha$ is chosen. Specifically, selecting a large $\alpha$ can accelerate the convergence of Algorithm \ref{alg_2}; however, it may lead to significant performance degradation. Conversely, a small $\alpha$ ensures higher solution accuracy but results in slower convergence of the iterative algorithm.

\textit{Convergence and complexity analysis:} We denote by $\mathcal{F}^{(i)}\triangleq \{\boldsymbol{\delta}^{(i)}\big|\eqref{eq:Mainb}\text{-}\eqref{eq:Maini},\eqref{lemma1_eq1}, \eqref{lemma2}, \eqref{lemma3}, \eqref{eq:long-term_Penc}$ are feasible\} and $\mathcal{L}^{(i)} \triangleq  \eta_L ({{\boldsymbol{\delta }^{(i)}}\left[ t \right]}) + \alpha \mathcal{P}^{(i)}(\boldsymbol{\delta}^{(i)} \left[ t \right])$ the convex feasible set and the objective value of \eqref{eq:SCA-Longterm-all} found at iteration $i$, respectively. Firstly, it is observed that the approximation in \eqref{eq:Penalty_SCA} satisfies the properties of \cite{AmirBeck2010}, such as $\mathcal{P}(\boldsymbol{\delta} \left[ t \right]) \leq \mathcal{P}^{(i)}(\boldsymbol{\delta} \left[ t \right])$, $\mathcal{P}^{(i+1)}(\boldsymbol{\delta} \left[ t \right]) \leq \mathcal{P}^{(i)}(\boldsymbol{\delta} \left[ t \right])$ and $\mathcal{P}(\boldsymbol{\delta}^{(i)} \left[ t \right]) = \mathcal{P}^{(i)}(\boldsymbol{\delta}^{(i)} \left[ t \right])$. In other words, we can say that the solution obtained by Algorithm \ref{alg_2} at iteration $i$ is also feasible for problem \eqref{eq:SCA-Longterm-all} at iteration $(i+1)$, \textit{i.e.} $\mathcal{F}^{(i)} \subseteq\mathcal{F}^{(i+1)}$. Following the same argument in \cite{AmirBeck2010}, Algorithm \ref{alg_2} generates a sequence of non-increasing objective values, and $\mathcal{L}^{(i)} \rightarrow  \eta_L ({{\boldsymbol{\delta }^{(i)}}\left[ t \right]})$ when $i\rightarrow\infty$. Given that the feasible set $\mathcal{F}^{(i)}$ is compact and convex and with an appropriate value of $\alpha$, Algorithm \ref{alg_2} produces a sequence of better points that converges to at least to a local optima, satisfying the Karush-Kuhn-Tucker (KKT) conditions \cite{AmirBeck2010}. Moreover, problem \eqref{eq:SCA-Longterm-all} involves only linear constraints, thus resulting in a low computational complexity. By the interior-point method \cite[Chapter 6]{ben2001lectures}, the worst-case complexity analysis of Algorithm \ref{alg_2} per iteration is $\mathcal{O}\left(\sqrt{c}(v)^3\right)$ where $c =  4M + 3K + 4MK + 2 $ and $v = K\left( {2M + KM + 1} \right)$ are the number of linear constraints and scalar variables, respectively.

\subsection{The Proposed Solution for Solving \textbf{S-SP} \eqref{eq:short-term}} 
In problem \eqref{eq:short-term}, constraints \eqref{eq:Mainb}, \eqref{eq:Mainc}, and \eqref{eq:Maink} are non-convex. In the following, we employ SCA to convexify these constraints.

\textit{Convexity of \eqref{eq:Maink}:} We can convert $R_m[t_j]$ into the form of $z\ln( {1 + a/y})$, where $a$ is a constant.
By [\citenum{H.D.Tuan}, Eq. (73)], the function $z\ln( {1 + a/y})$ can be approximated around the feasible points $\overline z >0$ and $\overline y>0$ as follows:
\begin{align}
  z\ln \Big( {1 + \frac{a}{y}} \Big) \ge&\, 2\overline z \ln \Big( {1 + \frac{a}{{\overline y }}} \Big)  + \frac{{\overline z a}}{{a + \overline y }}\Big( {1 - \frac{y}{{\overline y }}} \Big)\nonumber \\
  &\  - \frac{{\ln \Big( {1 + \frac{a}{{\overline y }}} \Big)}}{z}{ {\overline z }^2}.\label{eq:linear1}
\end{align}
By substituting $z = {b_m}\left[ {{t_j}} \right]$, $\overline z = b_m^{(i)}\left[ {{t_j}} \right]$, $y = {b_m}\left[ {{t_j}} \right]W{N_0}$, $\overline y = b_m^{(i)}\left[ {{t_j}} \right]W{N_0}$, and $a = {P_m}{\left\| {{\bf{h}}_m^k\left[ {{t_j}} \right]} \right\|^2}$ in \eqref{eq:linear1}, the global lower bound concave function of $R_{m,s}[t_j]$ is found as
\begin{align}
  & {\small R_{m,s}[t_j] \ge\sum_{k \in \mathcal{K}}\sigma_{m,s}^k[t]\frac{W}{\ln 2} \bigg(2b_m^{(i)}[t_j]\ln\Bigl(1 + \frac{P_m\|{\bf{h}}_m^k[t_j]\|^2}{b_m^{(i)}[t_j]WN_0}\Bigr)}\,\nonumber \\
    & + \frac{b_m^{(i)}[t_j]P_m\|{\bf{h}}_m^k[t_j]\|^2}{P_m\|{\bf{h}}_m^k[t_j]\|^2 + b_m^{(i)}[t_j]WN_0} \Bigl(1 - \frac{b_m[t_j]}{b_m^{(i)}[t_j]}\Bigr) \nonumber \\
  & - \frac{\ln\Bigl(1 + \frac{P_m\|{\bf{h}}_m^k[t_j]\|^2}{b_m^{(i)}[t_j]WN_0}\Bigr)}{b_m[t_j]}(b_m^{(i)}[t_j])^2\bigg) := R_{m,s}^{(i)}[t_j] \label{eq:SCA_27k}
\end{align}
with $R_{m,s}[t_j] = R_{m,s}^{(i)}[t_j]$ whenever $b_m^{(i)}[t_j] \equiv b_m^{(i+1)}[t_j]$.
As a result, constraint \eqref{eq:Maink} iteratively replaced by the following convex constraint:
\begin{align}
    R_{m,s}^{(i)}[t_j] \ge {R_{\min }},\,\forall m.\label{Convex_27k}
\end{align}

\textit{Convexity of} \eqref{eq:Mainc}: We introduce slack variables $\boldsymbol{r}[t_j]\triangleq\{r_{m,s}[ {{t_j}}]\}_{\forall m}$, satisfying
\begin{align}
    r_{m,s}\left[ {{t_j}} \right] \ge \frac{1}{R_{m,s}^{(i)}[t_j]},\,\forall m.\label{auxilary_variable}
\end{align} 
From \eqref{auxilary_variable}, constraint \eqref{eq:Mainc} can be equivalently decomposed as 
\begin{align} \label{eq:SCA-Constraint-27c}
		 {P_m}\left( {1 - {\varphi _{m,s}}\left[ {{t_j}} \right]} \right) r_{m,s}\left[ {{t_j}} \right] {a_{m,s}[t_j]}&\nonumber \\
        +\frac{\mu _m}{2}{\varphi _{m,s}}\left[ {{t_j}} \right]{\omega _{m,s}}\left[ {{t_j}} \right]{\left( {f_m^{\mathtt{ue}}\left[ {{t_j}} \right]} \right)^2} &\le E_m^{\max }.
\end{align}
Constraint \eqref{eq:SCA-Constraint-27c} is nonconvex due to the strong coupling between $({1 - {\varphi _{m,s}}\left[ {{t_j}} \right]})$
and $r_{m,s}\left[ {{t_j}} \right]$. To convexify it, we employ the following inequality:
\begin{align}
    yz \le \frac{1}{2}\Big( {\frac{{\overline z }}{{\overline y }}{y^2} + \frac{{\overline y }}{{\overline z }}{z^2}} \Big)\label{linear2}
\end{align}
where $\bar y $ and $\bar z $ are feasible points of $y$ and $z$, respectively. Letting $y = {1 - {\varphi _{m,s}}\left[ {{t_j}} \right]} > 0,$ $\overline y  = {1 - \varphi _{m,s}^{(i)}\left[ {{t_j}} \right]} > 0, $  $z = r_{m,s}\left[ {{t_j}} \right] > 0,$ $\overline z  = r_{m,s}^{(i)}\left[ {{t_j}} \right] > 0$, constraint \eqref{eq:SCA-Constraint-27c} is iteratively replaced by
\vspace{-2mm}
\begin{align}
   &E_{m,s}^{\mathtt{tot},(i)}[t_j] \buildrel \Delta \over = P_m a_{m,s}[t_j] \frac{1}{2} \Big(\frac{r_{m,s}^{(i)}[t_j]}{(1 - \varphi_{m,s}^{(i)}[t_j])} \nonumber \\
  &\qquad\quad   \times (1 - \varphi_{m,s}[t_j])^2 + \frac{(1 - \varphi_{m,s}^{(i)}[t_j])}{r_{m,s}^{(i)}[t_j]} 
    (r_{m,s}[t_j])^2 \Big) \nonumber\quad \\
&\qquad\quad    + \frac{\mu_m}{2} \varphi_{m,s}[t_j] \omega_{m,s}[t_j] 
    (f_m^{\mathtt{ue}}[t_j])^2 \leq E_m^{\max}. \label{Convex_27c}
\end{align}

\textit{Convexity of} \eqref{eq:Mainb}: From \eqref{auxilary_variable}, we first rewrite \eqref{eq:Mainb} as
\begin{align}
    &T_{m,s}^{\mathtt{e2e}}\left[ {{t_j}} \right] \le
    T_{m,s}^{\mathtt{tot,pro}}\left[ {{t_j}} \right] + T_{m,s}^{\mathtt{tot,cp}}\left[ {{t_j}} \right] + {a_{m,s}}\left[ {{t_j}} \right] \nonumber \\
    &\times \left( {1 - {\varphi_{m,s}}\left[ {{t_j}} \right]} \right) r_{m,s}\left[ {{t_j}} \right] + \mathop {\max }\limits_{\forall k} \big\{T_{k,s}^{\mathtt{c}}\left[ {{t_j}} \right] +  {T_{k,s}^{k'}\left[ {{t_j}} \right]}  \big\}.\label{replace_r_27b}
\end{align}
and then employ \eqref{linear2} to approximate it as 
\begin{align}
    &T_{m,s}^{\mathtt{e2e}}[t_j] \le T_{m,s}^{\mathtt{tot,pro}}[t_j] + T_{m,s}^{\mathtt{tot,cp}}[t_j] + \frac{1}{2}a_{m,s}[t_j]\nonumber \\
   &\times \Big(\frac{r_{m,s}^{(i)}[t_j]}{(1 - \varphi_{m,s}^{(i)}[t_j])} (1 - \varphi_{m,s}[t_j])^2 + \frac{(1 - \varphi_{m,s}^{(i)}[t_j])}{r_{m,s}^{(i)}[t_j]}\nonumber \\
    &\times(r_{m,s}[t_j])^2\Big) + \max_{\forall k} \big\{T_{k,s}^{\mathtt{c}}[t_j]+ T_{k,s}^{k'}[t_j]\big\}  \nonumber \\
    &:= T_{m,s}^{\mathtt{e2e},(i)}[t_j]. \label{SCA_27b}
\end{align}
The function $T_{m,s}^{\mathtt{e2e},(i)}[t_j]$ is convex, resulting the following convex constraint of \eqref{eq:Mainb}
\begin{align}
    T_{m,s}^{\mathtt{e2e},(i)}\left[ {{t_j}} \right] \le T_m^{\max },\forall m.\label{Convex_27b}
\end{align}

\begin{algorithm}[t]
\small
\begin{algorithmic}[1]
\protect\caption{Proposed SCA-based Iterative
Algorithm for Solving S-SP \eqref{eq:short-term}} \label{alg_3} \global\long\def\algorithmicrequire{\textbf{Initialization:}}	
      \REQUIRE Set $i: = 0$ and generate feasible points $({\boldsymbol{\varphi }}^{(0)}[t_j],{\boldsymbol{b}}^{(0)}[t_j],\boldsymbol{r}^{(0)}[t_j])$.
	   \REPEAT
	    \STATE Solve problem \eqref{eq:SCA-short-term} to obtain the  optimal solutions $({\boldsymbol{\varphi }}^{*}[t_j],{\boldsymbol{b}}^{*}[t_j],\boldsymbol{r}^{*}[t_j])$;
	    \STATE Update $({\boldsymbol{\varphi }}^{(i)}[t_j],{\boldsymbol{b}}^{(i)}[t_j],\boldsymbol{r}^{(i)}[t_j]) := $ $({\boldsymbol{\varphi }}^{*}[t_j],{\boldsymbol{b}}^{*}[t_j],$ $\boldsymbol{r}^{*}[t_j])$.
	    
	    \STATE Set $i: = i + 1$;
	    
	    \UNTIL Convergence 
           \STATE \textbf{Output} $({\boldsymbol{\varphi }}^{*}[t_j],{\boldsymbol{b}}^{*}[t_j])$.
\end{algorithmic}
\end{algorithm}

Based on the above analysis, the convex program of \eqref{eq:short-term} solved at iteration $i$ is given as
\begin{subequations} \label{eq:SCA-short-term}
	\begin{IEEEeqnarray}{cl}
&\textbf{S-SP-Convex:}\		 \mathop {\min }\limits_{{\boldsymbol{\varphi }},{\boldsymbol{b}},{\boldsymbol{r}}} {\omega ^t}{\sum_{m \in \mathcal{M}} {{T_{m,s}^{\mathtt{e2e},(i)}\left[ {{t_j}} \right]} } }  + {\omega ^c}\bar x\label{eq:SCA-short-terma} \qquad\\
		&\qquad\quad \st\  
        \eqref{eq:Maini}, \eqref{eq:short-termc}, \eqref{Convex_27k}, \eqref{auxilary_variable}, \eqref{Convex_27c}, \eqref{Convex_27b}.\label{eq:SCA-short-termb} \qquad
	\end{IEEEeqnarray}
\end{subequations}
The proposed iterative algorithm for solving problem \eqref{eq:short-term} is summarized in Algorithm \ref{alg_3}. Its convergence can be established similarly to that of Algorithm \ref{alg_2}. Problem \eqref{eq:SCA-short-term} only involves $3M$ scalar variables and $5M + K + 3$ linear and second-order cone (SOC) constraints. As a result, the computational complexity per iteration of Algorithm \eqref{alg_3} is $\mathcal{O}\big(\sqrt{5M + K + 3}(3M)^3\big)$.

\section{Numerical Results}\label{sec_NumericalResults}
 \subsection{Network Setting and Benchmark Schemes}\label{sec_simulation_setup}
 We consider an HECC-aided IoT network where all UEs and APs are located within a small-cell scenario covering an area of $200\times 200$ m. The ES locations vary based on the number of servers: For $K = 2$, they are positioned at $(66,100)$ and $(133,100)$, while for $K = 4$, they are at $(40,100)$, $(80,100)$, $(120,100)$, $(160,100)$ for $K = 4$. The distance from ES $k$ to the central cloud is fixed at
 ${d_{k,c}} = 10$~km. The pathloss between UE $m$ and AP $k$ is modeled as $\mathtt{PL}_m^k  = - 35.3 - 37.6\log_{10}({d_{mk}})$ measured in dB, where $d_{mk}$ is the corresponding distance \cite{H.D.Tuan}.  The noise spectral density is set to -$174$ dBm/Hz.
 
 Following\cite{huynh_latency}, the maximum computing capacity of the cloud center is set to $F_c^{\max } = 30$ GHz. The propagation speed is set to $v = 2 \times 10^8$ m/s and the number of time-frames is ${T_J} = 50$. UEs randomly request services from ESs or the central cloud with a cycle of five time-frames. The service request price for data transfer from ESs to the central cloud, based on the AWS pricing for intra-region transfers within the US East (New York City) region, is $\tau^\mathtt{r} = \$0.01$   \cite{aws_ec2_pricing}. However, this pricing model can be easily adapted for other edge-cloud computing providers. Other parameters are listed in Table \ref{tab:Simulationparameter1}, following studies in \cite{huynh_latency, huynh_service_placement, Zhou_cost, cooperation2017,Zhang_2021, 3GPP, H.D.Tuan, Jiaming_2024, Liu_resource_frame}.

The proposed algorithms are implemented in the MATLAB environment. We use the MOSEK solver within the CVX modeling toolbox to solve all convex programs.
 \begin{table}[t]
\centering  
\captionof{table}{Simulation Parameters}
\label{tab:Simulationparameter1}
\scalebox{0.77}{
    \begin{tabular}{l|l}
        \hline
        Parameter & Value \\
        \hline\hline
        System bandwidth, $W$ & $10$ MHz \\
        Number of UEs, $M$ & $\{8, 10 \}$ \cite{huynh_service_placement}\\
        Number of antennas at each AP, $L$&$\{8, 10 \}$\\
        Maximum number of services, $S$  & $8$\\
        Maximum number of services in ES, $S_k^{\max}, \forall k$ & $6$\\ 
        Processing rate of UE $\&$ ES, $f_m^{\mathtt{ue}}$ $\&$ ${f_{mk}},\forall k$  & $1$ GHz $\&$ $2$ GHz \\
         Maximum computing capacity of ESs, $F_k^{\max }$ & $\{8,20\}$ GHz\\
        Cloud processing rate, ${f_{m\mathtt{c}}}$ & $4$ GHz \cite{cooperation2017,Zhang_2021}\\
        Backhaul $\&$ fronthaul capacity,
        $R_k^{k'}$ $\&$ $R_k^\mathtt{c}$, $\forall k$ & $5$ $\&$ $1$ Gbps \\
        Task size, ${a_{m,s}},\,\forall m, s$ & $1354$ bytes \cite{3GPP} \\
        Maximum delay requirement, $T_m^{\max },\,\forall m$ & $2$ ms\\
        Task complexity, ${\raise0.7ex\hbox{${{w_m}}$} \!\mathord{\left/
        {\vphantom {{{w_m}} {{a_{m,s}}}}}\right.\kern-\nulldelimiterspace}
        \!\lower0.7ex\hbox{${{a_{m,s}}}$}},\forall m,s$  & $\left[ {200,500} \right] \frac{cycles}{byte}$ \\
        UE's transmitted power, ${P_m},\,\forall m$ & $23$ dBm \\
        Minimum data rate requirement, ${R_{\min }}$ & $1$ Mbps \\
        Price for installing a service, ${\tau ^\mathtt{i}}$ & $0.1$ $\$/$service \cite{Jiaming_2024}\\
        Price for uninstalling a service, ${\tau ^\mathtt{u}}$ & $0.05$ $\$$/service \cite{Jiaming_2024} \\
        Price for operating a service, ${\tau ^\mathtt{o}}$ & $0.1$ $\$$/service \cite{Jiaming_2024} \\
        Maximum average monetary cost, ${X^{\max }}$ & $20$ ${(\$)}$ \cite{Jiaming_2024}\\
        UE's Maximum energy consumption, $E_m^{\max }$ & $1$ Joule \\
        Effective capacitance coefficient, ${\mu _m}$ & ${10^{ - 27}}\,\frac{Watt.{s^3}}{cycl{e^3}}$\\
        \hline		  				
    \end{tabular}
}
\end{table}

\textit{Benchmark Schemes:} For performance comparison, we consider the following benchmark schemes:
\begin{itemize}
    \item \textbf{Branch-and-Bound (BnB)} is an optimal algorithm for solving NP-Hard problems but with worst-case exponential complexity \cite{Dinh_2019}. 
    \item \textbf{Fixed User Association (FUAS)}: This scheme selects the strongest wireless channel between a UE and ESs \cite{huynh_latency} and maintains the connection over a long-term timeslot.
    \item \textbf{Random User Association (RUAS)} assigns UEs to ESs randomly for task offloading \cite{huynh_latency}.
    \item  \textbf{No Edge-Edge Cooperation (NEEC)} disables ES cooperation to assess its impact on performance.
    \item \textbf{Without Cloud (W/o Cloud)} processes tasks only at ESs, evaluating the cloud's role in reducing cost and e2e latency \cite{huynh_latency}.
    \item \textbf{Equal Bandwidth (EB)} allocates equal bandwidth to all UEs to study its effect on e2e latency.
    \item \textbf{Fixed Offloading Rates} ($30\%, 80\%, 100\%$) enforces specific offloading rates from UEs to ESs to analyze performance under varying workloads.
\end{itemize}

\subsection{Convergence Behavior}\label{sec_numerical_results_dicussions}
\begin{figure}[t]
    \centering
    \vspace{-2mm}
    \subfigure[Impact of $\alpha$ with $M$ $=$ $8$ UEs and $K$ $=$ $2$ ESs]{
        \includegraphics[width = 2.5in]{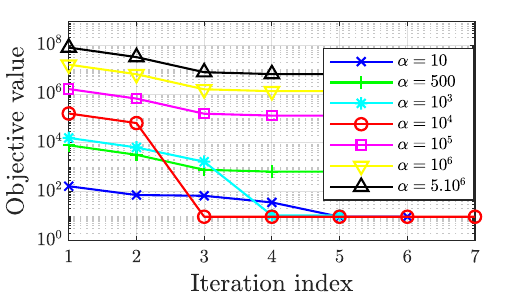}
        \label{fig:Convergence_penalty}
    }
    \subfigure[Convergence behavior with $\alpha = 10^4$ ]{
        \includegraphics[width =2.5in]{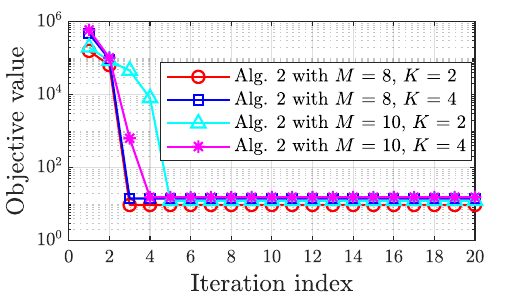}
        \label{fig:Convergence_algorithm2}
    }
      \vspace{-2mm}
    \caption{\small Convergence analysis of Algorithm \ref{alg_2}. }
    \label{fig:Convergence_algorithm2_penalty}
\end{figure}

In Fig. \ref{fig:Convergence_algorithm2_penalty}, we analyze the convergence of Algorithm \ref{alg_2} under varying penalty parameters $\alpha$, and 
 numbers of UEs and ESs. As shown in Fig.~\ref{fig:Convergence_penalty}, each value of $\alpha$ produces a decreasing sequence of objective values, converging in about five iterations. Larger $\alpha$ values (\textit{e.g.} $10^6$) yield higher objective values due to excessive penalties, which increase tolerance for binary variables ${\boldsymbol{\delta}}[t]$. Conversely, smaller $\alpha$ values (\textit{e.g.} $10, 500, 1000$) create a mismatch between the penalty function \eqref{penalty_function} and objective function \eqref{eq:long-term}, slowing convergence. The optimal convergence occurs with $\alpha = 10^4$, achieving the minimum objective value in four iterations due to reduced error tolerance. As a result, we choose this $\alpha = 10^4$ to compare convergence across different schemes. Fig. \ref{fig:Convergence_algorithm2} shows that convergence slightly slows for ($M = 10, K = 2$) and ($M = 10, K = 4$), indicating modest sensitivity to problem size.

\begin{figure}[!htb]
	\centering
    \includegraphics[width=0.8\columnwidth,trim={0cm 0.0cm 0cm 0.0cm}]{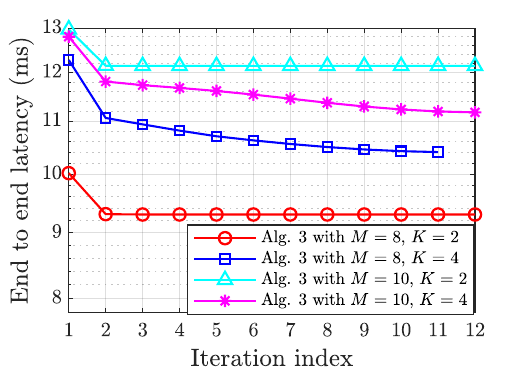}
    \vspace{-2mm}
	\caption{\small Convergence behavior of Algorithm \ref{alg_3}.}
	\label{fig:Convergence_algorithm3}
\end{figure}

We examine the impact of network size on the convergence behavior of Algorithm~\ref{alg_3}. As can be seen from Fig.~\ref{fig:Convergence_algorithm3}, the convergence rate declines more noticeably as the number of UEs and ESs increases, indicating that larger networks require more iterations to achieve convergence.

\subsection{Performance Comparison}
\begin{figure}[h]
    \centering
    \subfigure[\small The objective value versus timeslot, $t$]{
        \includegraphics[width = 2.5in]{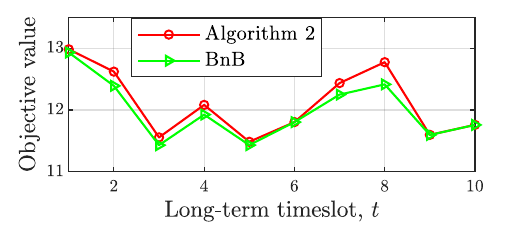}
        \label{fig:compare_Al2_BB_overall}
    }
    \subfigure[\small The cost value versus timeslot, $t$]{
        \includegraphics[width =2.5in]{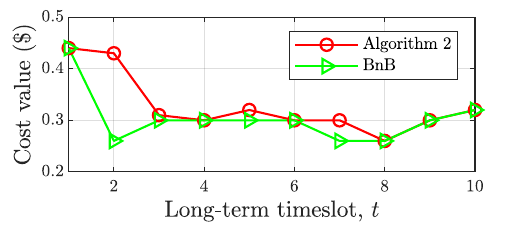}
        \label{fig:compare_Al2_BB_cost}
    }
    \vspace{-2mm}
    \caption{\small Performance comparison between Algorithm~\ref{alg_2} and BnB with $M = 8$ UEs and $K = 2$ ESs.}
    \label{fig:compare_BB}
\end{figure}

In Fig.~\ref{fig:compare_Al2_BB_overall}, we compare the performance of Algorithm \ref{alg_2} with BnB. BnB achieves optimal service placement, user association, and edge-cloud cooperation, minimizing cost and end-to-end latency over ten long-term timeslots. However, Algorithm~\ref{alg_2} deviates by only $1$-$2\%$, demonstrating highly effective binary decisions with significantly lower complexity. Fig.~\ref{fig:compare_Al2_BB_cost} further compares system costs, showing that both algorithms exhibit similar trends in response to user requests and the uncertainty of wireless channels. This confirms that our approach provides near-optimal solutions while maintaining high system performance.

\begin{figure}[t]
		\subfigure[System cost versus $t_j$]
		{
			\includegraphics[scale=0.46]{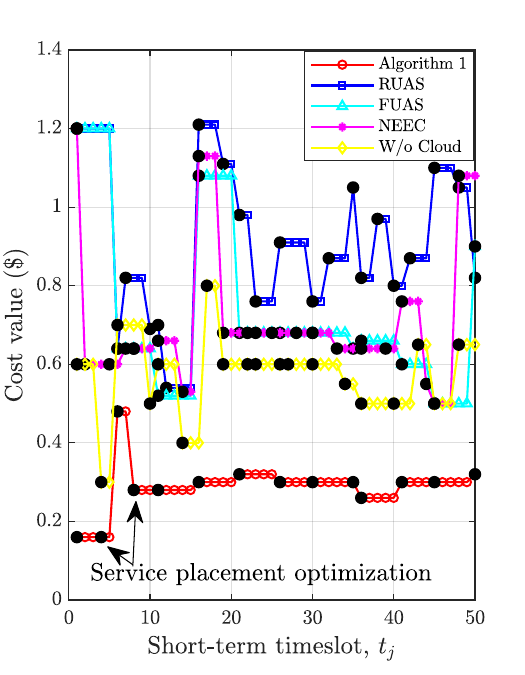}
            \label{fig:overall_cost}
		}\hspace{-15pt}
		\subfigure[Total users' e2e latency versus, $t_j$]
		{
			\includegraphics[scale=0.46]{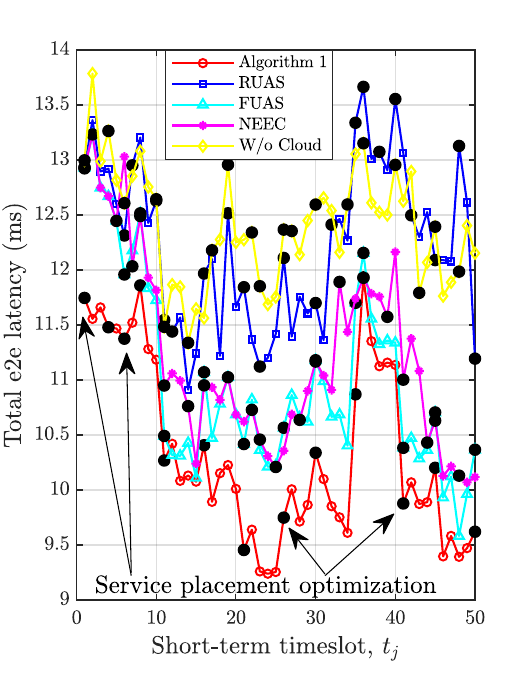}
			\label{fig:overall_latency_time_index}
		}
		\caption{Impacts of the user association and edge/cloud cooperation on service placement with $M = 8$ UEs and $K = 2$ ESs. }
		\label{fig:impact_binary_Schemes}
\end{figure} 
Fig. \ref{fig:impact_binary_Schemes} examines the impact of user association and edge/cloud cooperation on service placement. As shown in Fig. \ref{fig:overall_cost}, Algorithm~\ref{alg_1} achieves the lowest and most stable system cost across timeslots, even under dynamic network conditions. In contrast, RUAS has the highest cost with greater fluctuations due to random user association, frequently triggering service placement optimization to meet e2e latency and offloading requirements.
FUAS, which selects the best wireless channel, offers more stability than RUAS but increases costs when ESs require additional software or cloud transfers. Similarly, NEEC and W/o Cloud see cost surges of $200$-$250\%$ compared to Algorithm~\ref{alg_1} due to the lack of edge-edge or edge-cloud cooperation.
Fig. \ref{fig:overall_latency_time_index} shows frequent e2e latency fluctuations across all schemes due to dynamic network conditions. Algorithm~\ref{alg_1} significantly outperforms the benchmark schemes in terms of latency reduction while requiring the fewest long-term re-optimization triggers. In contrast, the W/o Cloud and RUAS schemes experience the highest delays due to the lack of network optimization.

\begin{figure}[t]
    \centering
    \subfigure[\small CDF versus $T^{\max}\equiv T_m^{\max },\, \forall m $]{
        \includegraphics[width = 2.5in]{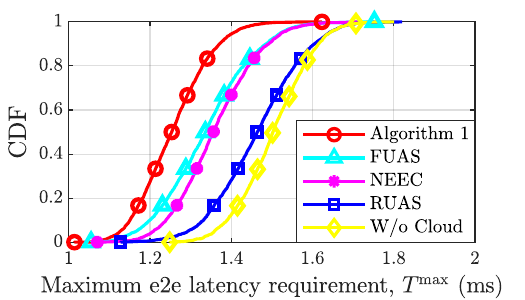}
        \label{fig:overall_latency}
    }
    \subfigure[\small CDF versus $X^{\max}$]{
        \includegraphics[width =2.5in]{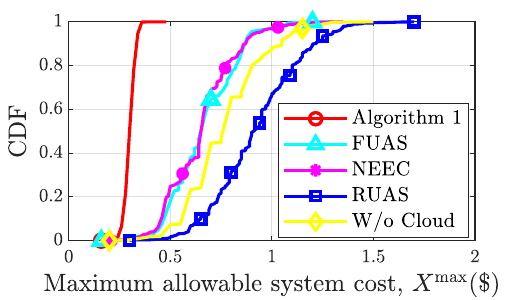}
        \label{fig:CDF_Al2_BB_cost}
    }
    \vspace{-2mm}
    \caption{\small Cumulative distribution function with $M = 8$ UEs and $K = 2$ ESs.}
    \label{fig:overall_system cost}
\end{figure}

Fig. \ref{fig:overall_latency} presents the cumulative distribution function (CDF) of the maximum users' e2e latency threshold, $T^{\max}\equiv T_m^{\max }, \forall m $. Results show that feasibility probabilities decrease as $T^{\max}$ decreases. Algorithm \ref{alg_1} outperforms all other benchmark schemes, maintaining lower e2e latency and better fairness among UEs compared to W/o Cloud and RUAS. FUAS and NEEC achieve an approximate 0.2 ms latency offset, surpassing RUAS and W/o Cloud in $90\%$ of trials. These findings highlight the importance of jointly optimizing user association, edge/cloud cooperation, and service placement for improved QoS. Fig. \ref{fig:CDF_Al2_BB_cost} shows the CDF as a function of the system cost threshold, $X^{\max}$, assessing sensitivity to cost variations. RUAS has the lowest feasibility, while Algorithm \ref{alg_1} achieves the highest for lower $X^{\max}$ values. NEEC, FUAS, and W/o Cloud exhibit system cost $\$0.5$-$0.7$ lower than RUAS in $50\%$ of trials. These results confirm that random user association significantly impacts service placement optimization, leading to higher system costs.

\begin{figure}[t]
	\centering
        \vspace{-2mm}
	\includegraphics[width=0.8\columnwidth,trim={0cm 0.0cm 0cm 0.0cm}]{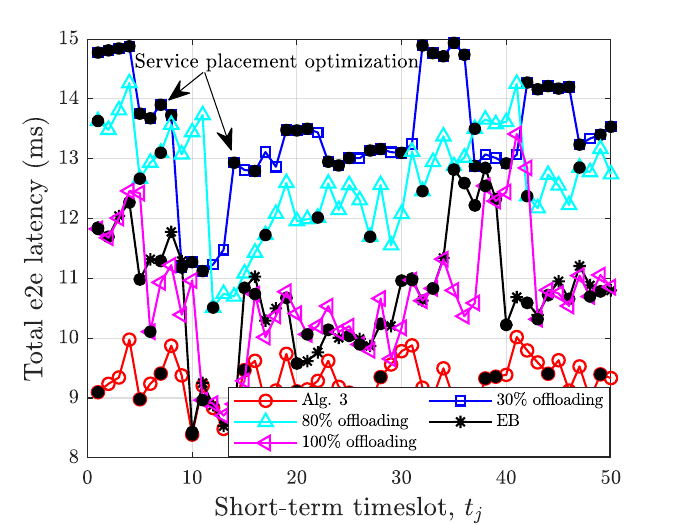}
	\caption{\small The e2e latency between Algorithm \ref{alg_3} and benchmark schemes with $M = 8$ UEs and $K = 2$ ESs.}
	\label{fig:compare_Al3}
\end{figure}
\textit{Impact of task offloading:} Fig. \ref{fig:compare_Al3} examines the impact of task offloading on the evaluated schemes. As shown, the total e2e latency increases by approximately 50\%, 40\%, and 25\% compared to Algorithm \ref{alg_3} when task offloading is fixed at 30\%, 80\%, and 100\%, respectively. This suggests that increasing task offloading to ESs under unpredictable network conditions significantly reduces latency while maintaining stable ES configurations. However, the 100\% offloading scheme does not achieve the lowest latency due to its fixed proportion across all UEs, leading to a suboptimal solution. Additionally, the EB scheme experiences a $5$-$10\%$ latency increase compared to Algorithm~\ref{alg_3} due to equal bandwidth allocation instead of optimization. Clearly, Algorithm~\ref{alg_3}, which jointly optimizes offloading and bandwidth allocation, achieves the lowest total e2e latency among all schemes.

\begin{figure}[t]
	\centering
	\vspace{-2mm}\includegraphics[width=0.8\columnwidth,trim={0cm 0.0cm 0cm 0.0cm}]{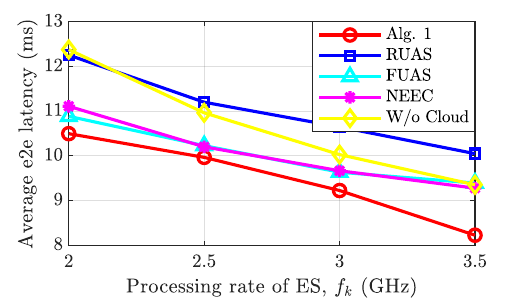}
	\caption{\small The average e2e latency versus the processing rate $f_k$, with $M = 8$ UEs and $K = 2$ ESs.}
	\label{fig:compare_latency_behavior_different_F_long}
\end{figure}
\begin{figure}[t]
	\centering
	\vspace{-2mm}\includegraphics[width=0.8\columnwidth,trim={0cm 0.0cm 0cm 0.0cm}]{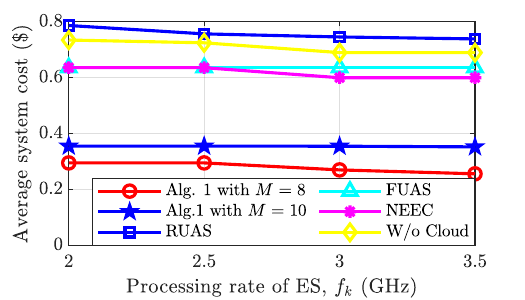}
	\caption{\small The average system cost of different service placement and user association schemes versus the processing rate $f_k$ of ESs, with $M = \{8, 10\}$ UEs and $K = 2$ ESs.}
	\label{fig:compare_cost_behavior_different_F_long}
\end{figure} 

\textit{Impact of processing rate:} Fig.~\ref{fig:compare_latency_behavior_different_F_long} shows that the average total e2e latency across all schemes is reduced significantly as the ESs' processing rate $f_{mk}\equiv f_k, \forall k $ increases from $2$ GHz to $3.5$ GHz, primarily due to reduced task processing times at higher ES speeds. Similarly, Fig. \ref{fig:compare_cost_behavior_different_F_long} analyzes the impact of ES processing rates on system costs. With 
$M = 8$ UEs and $K = 2$ ESs, system costs gradually decrease as higher processing rates allow more tasks to be executed locally, reducing cloud offloading and service installation costs. However, in scenarios like FUAS, Algorithm \ref{alg_1} maintains constant costs. This is because FUAS assigns UEs to the nearest ESs, and the increased UE count keeps costs stable despite changes in ESs' processing rates.

\begin{figure}[t]
	\centering
	\includegraphics[width=1\columnwidth,trim={0cm 0.0cm 0cm 0.0cm}]{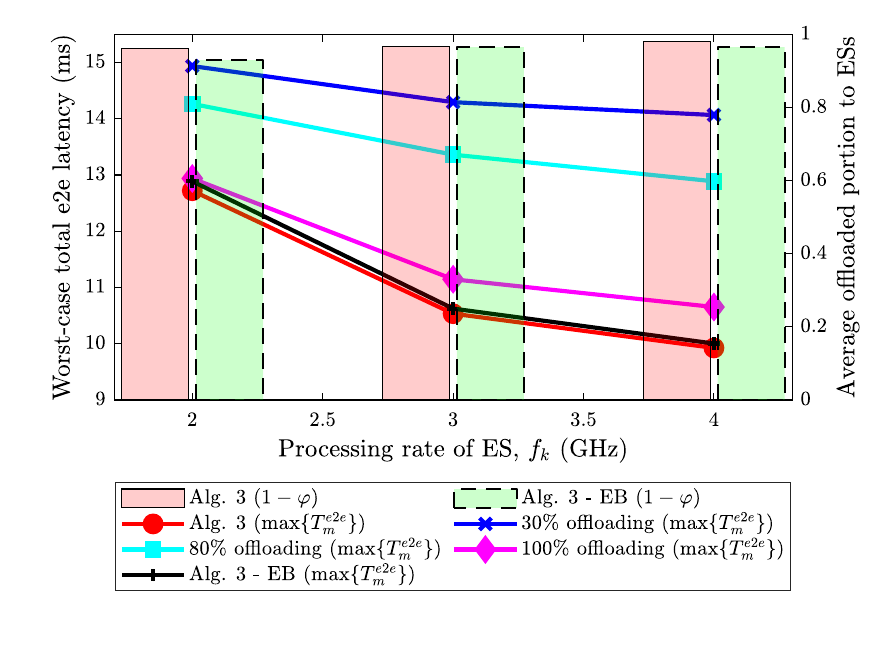}
	\caption{\small The worst-case total e2e latency and average offloading portion versus different ESs processing rate $f_k$, with $M = 8$ UEs and $K = 2$ ESs.}
    \label{fig:compare_latency_offload_behavior_different_f}
\end{figure}
We further examine the impact of ESs' processing rates on task offloading behavior and worst-case total e2e latency across different resource allocation schemes, as shown in Fig.~\ref{fig:compare_latency_offload_behavior_different_f}. The results indicate that as the ESs' computational resources increase from $2$ GHz to $4$ GHz, the average offloading proportion from UEs to ESs rises slightly, while the worst-case total e2e latency decreases significantly. Notably, Algorithm~\ref{alg_3} maintains a consistently higher average offloading proportion than the EB scheme. However, despite $100\%$ task offloading without optimization, total e2e latency remains unchanged. This underscores the importance of jointly optimizing bandwidth and task offloading allocation to minimize latency.

\begin{figure}[h]
	\centering
    \includegraphics[width=0.9\columnwidth,trim={0cm 0.0cm 0cm 0.0cm}]{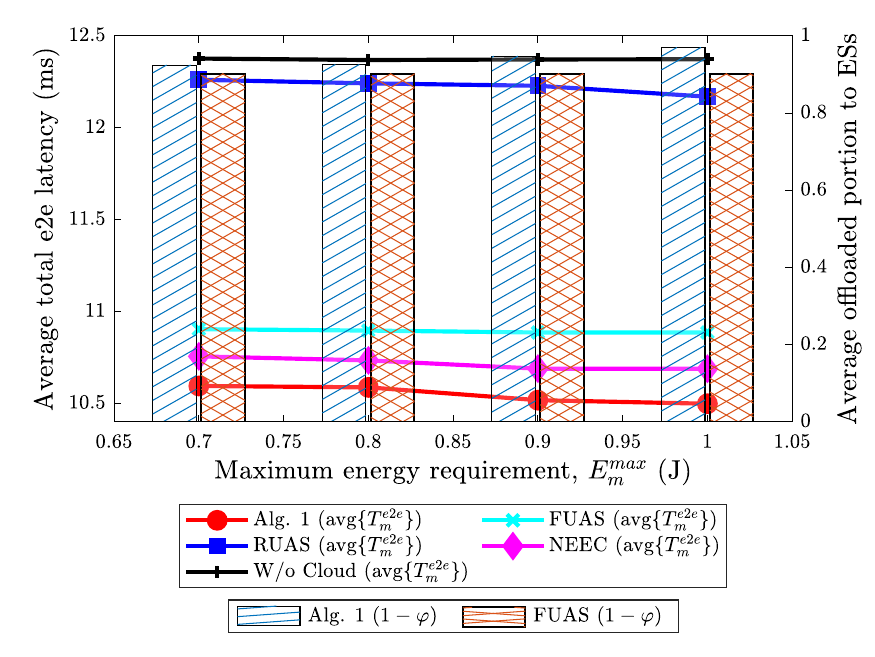}
	\caption{\small The average total e2e latency versus the maximum energy consumption requirement $E_m^{\max }$, with $M = 8$ UEs and $K = 2$ ESs.}
\label{fig:compare_latency_behavior_different_E_long}
\end{figure}

\textit{Impact of maximum energy requirement:} As shown in Fig. \ref{fig:compare_latency_behavior_different_E_long}, Algorithm \ref{alg_1} significantly reduces the average total e2e latency. This is attributed to the fact that a higher task offloading portion, $(1-\varphi \left[ {{t_j}} \right])$, can be more effectively allocated when UEs have a larger $E_m^{\max}$ limit, given that all binary decision variables are jointly optimized while satisfying constraint~\eqref{eq:Mainc}. In contrast, other schemes show only slight latency reductions or remain stable due to long-term network configuration constraints. The absence of edge-edge cooperation, fixed or random user association, and lack of edge-cloud cooperation hinder short-term optimization, restricting improvements in task offloading efficiency.

\section{Conclusion}\label{sec_conclusion}\vspace{-2pt}
In this paper, we have explored joint service placement and resource allocation in HECC-aided IoT networks to minimize the total e2e latency and system cost. We have formulated an optimization problem integrating service placement, user association, edge/cloud cooperation, task offloading, and bandwidth allocation while accounting for ESs' computational constraints and UE energy and latency requirements. Given the problem's NP-hard nature due to binary-continuous variable coupling, we have introduced new lemmas to simplify the formulation and developed a two-timescale optimization framework. The proposed framework applies the SCA method to convexify non-convex constraints and employs penalty functions to enforce binary decisions. Numerical results demonstrate fast convergence and significant performance improvements, underscoring the effectiveness of joint service placement and resource allocation in HECC-aided IoT networks. Future work will explore deep reinforcement learning methods to solve the mixed-integer optimization problem involving service placement, user association, and edge-cloud cooperation in large-scale networks.

\appendices
\renewcommand{\thesectiondis}[2]{\Alph{section}:}
\section{Proof of Proposition \ref{pro1}} \label{app:DerivationofInequ}
\renewcommand{\theequation}{\ref{app:DerivationofInequ}.\arabic{equation}}\setcounter{equation}{0}\vspace{-2pt}

\begin{table}[htb]
\centering  
\captionof{table}{True Table}
\label{tab:true_table}
    \begin{tabular}{@{}cc|c@{}}
        \hline
        $A$ & $B$ & $Y = AB$ \\
        \hline\hline
       0 & 0 & 0 \\
       1 & 0 & 0 \\
       1 & 1 & 1 \\
        \hline                          
    \end{tabular}\vspace{-5pt}
\end{table}
First, let us consider the above truth table, where $A \in \left\{ 0,1 \right\}$, $B \in \left\{ 0,1  \right\}$, and $B \leq A$.  From Table \ref{tab:true_table}, it follows that $Y = B$, and then
\begin{align}\label{proof_logic_simplify}
    \left\{ {\begin{array}{*{20}{c}}
{Y = AB}\\
{B \le A}
\end{array}} \right. \Leftrightarrow \left\{ {\begin{array}{*{20}{c}}
{Y = B}\\
{B \le A}.
\end{array}} \right.
\end{align}
From Lemmas \ref{LM_lemma1}-\ref{LM_lemma3} and \eqref{proof_logic_simplify}, we are now in position to show derivations in \eqref{eq_pro1}.

\textit{Derivation of \eqref{reform_fk}:} Eq. \eqref{eq_fk} can be rewritten as $f_k^{\mathtt{tot}}\left[ {{t}} \right] = {Y_1} + {Y_2}$,
where 
\begin{align}
   Y_1 \triangleq & \sum_{m \in \mathcal{M}} \Big(g_s^k[t] \big(\sigma_{m,s}^{k}[t] - \sigma_{m,s}^{k}[t] \sum_{k' \in \mathcal{K} \setminus \{k\}} \sigma_{m,s}^{k,k'}[t]\nonumber\\
    &- \sigma_{m,s}^{k}[t] \sigma_{m,s}^{k,\mathtt{c}}[t]\big)  f_{mk}\Big)\label{Y1}\\ 
Y_2 \triangleq & \sum_{k' \in \mathcal{K} \setminus \{k\} } \sum_{m' \in \mathcal{M}} \Big(g_s^k[t] \sigma_{m',s}^{k'}[t] \sigma_{m',s}^{k',k}[t]  f_{m'k}\Big)\label{Y2}.
\end{align} 
From \eqref{proof_logic_simplify} and \eqref{lemma2_eq3}, we can decouple $\sigma_{m,s}^k\left[t \right]\sum_{k' \in \mathcal{K} \setminus \{k\}} {\sigma _{m,s}^{k,k'}\left[ t \right]}$ in $Y_1$ as follows:
\begin{equation}\label{derive_Y1_p1}
    \left\{ \begin{array}{l}
    X_1 = \sigma_{m,s}^k[t]  \!\!\!\sum\limits_{k' \in \mathcal{K} \setminus \{k\}} \!\!\!\sigma_{m,s}^{k,k'}[t] \\
    \eqref{lemma2_eq3}
    \end{array} \right.
    \!\!\!\Leftrightarrow\!
    \left\{ \begin{array}{l}
    X_1 = \!\!\!\sum\limits_{k' \in \mathcal{K} \setminus \{k\} } \!\!\!\sigma_{m,s}^{k,k'}[t] \\
    \eqref{lemma2_eq3}.
    \end{array} \right.
\end{equation}
By \eqref{lemma2_eq4}, we can equivalently rewrite $\sigma_{m,s}^k[t]\sigma_{m,s}^{k,\mathtt{c}}[t]$ as
\begin{equation}\label{derive_Y1_p2}
    \left\{ \begin{array}{l}
    X_2 = \sigma_{m,s}^k[t]\sigma_{m,s}^{k,\mathtt{c}}[t] \\
    \eqref{lemma2_eq4}
    \end{array} \right.
    \Leftrightarrow
    \left\{ \begin{array}{l}
    X_2 = \sigma_{m,s}^{k,\mathtt{c}}[t] \\
    \eqref{lemma2_eq4}.
    \end{array} \right.
\end{equation}
From \eqref{derive_Y1_p1} and \eqref{derive_Y1_p2}, $Y_1$ is simplified as follows: 
\begin{equation}\label{Y1_reform_1}
\begin{split}
Y_1 = &\sum_{m \in \mathcal{M}} \big(g_s^k[t] (\sigma_{m,s}^{k}[t] - 
    \sum_{k' \in \mathcal{K} \setminus\{k\}} \sigma_{m,s}^{k,k'}[t] \\
    &\qquad\, - \sigma_{m,s}^{k,\mathtt{c}}[t] )   f_{mk}\big)
\end{split}
\end{equation} 
Next, from the service availability \eqref{lemma1_eq1}, we decouple $g_s^k[t] (\sigma_{m,s}^{k}[t] - 
    \sum_{k' \in \mathcal{K} \setminus\{k\}} \sigma_{m,s}^{k,k'}[t]  - \sigma_{m,s}^{k,\mathtt{c}}[t] )$ as
\begin{equation}\label{derive_Y1_p3}
\begin{aligned}
&\quad\, \left\{
    \begin{aligned}
        & X_3 = g_s^k[t] \Big(\sigma_{m,s}^{k}[t] - 
        (\!\!\!\!\sum_{k' \in \mathcal{K} \setminus \{k\}} \sigma_{m,s}^{k,k'}[t] + \sigma_{m,s}^{k,\mathtt{c}}[t])\Big) \\
        & \eqref{lemma1_eq1}
    \end{aligned}
\right. \\
&\Leftrightarrow \left\{
    \begin{aligned}
        &X_3 = \sigma_{m,s}^{k}[t] - 
        \big(\!\!\!\!\sum_{k' \in \mathcal{K} \setminus \{k\}} \!\!\!\!\sigma_{m,s}^{k,k'}[t] + \sigma_{m,s}^{k,\mathtt{c}}[t]\big) \\
        &\,\,\, \!\!\!\!\eqref{lemma1_eq1}.
    \end{aligned}
\right.
\end{aligned}
\end{equation}
to rewrite $Y_1$ as 
\begin{align}\label{Y1_final}
Y_1 = &\sum_{m \in \mathcal{M}} \Big(\sigma_{m,s}^{k}[t] - 
    \big(\sum_{k' \in \mathcal{K} \setminus \{k\}} \sigma_{m,s}^{k,k'}[t] +\sigma_{m,s}^{k,\mathtt{c}}[t]\big) \Big)f_{mk}.
\end{align} 
Similarly, by \eqref{proof_logic_simplify} and \eqref{lemma2_eq3}, we can simplify to $Y_2$ as 
\begin{equation}\label{Y2_final}
Y_2 = \!\!\!\!\sum_{k' \in \mathcal{K} \setminus \{k\}} \sum_{m' \in \mathcal{M}} \!\!\!\Big(\sigma_{m',s}^{k',k}[t] f_{m'k}\Big).
\end{equation} 
From \eqref{Y1_final} and \eqref{Y2_final}, we can finally rewrite $f_k^{\mathtt{tot}}[t]$  as shown in \eqref{reform_fk}. 

\textit{Derivation of \eqref{fc_total_reform}, \eqref{T_kc_ts_reform}, \eqref{T_kk_ts_reform} and \eqref{T_total__pro_reform}:} Combining \eqref{derive_Y1_p2}, \eqref{lemma2_eq3}, \eqref{lemma2_eq4} and \eqref{lemma3.2}, we can simplify $f_\mathtt{c}^{\mathtt{tot}}\left[ {{t}} \right],  T_{k}^{\mathtt{c}}[t_j],T_{k}^{k'}[t_j], T_{m,s}^{\mathtt{tot,pro}}\left[ {{t_j}} \right]$ as shown in \eqref{fc_total_reform}, \eqref{T_kc_ts_reform}, \eqref{T_kk_ts_reform} and \eqref{T_total__pro_reform}, respectively.

\textit{Derivation of \eqref{T_tot_cp_reform}:} First, we decompose $T_{m,s}^{\mathtt{tot,cp}}\left[ {{t_j}} \right]$ as follows:
$ T_{m,s}^{\mathtt{tot,cp}}\left[ {{t_j}} \right] = T_{m,s}^{\mathtt{ue,cp}}\left[ {{t_j}} \right] + {T_1} + {T_2} + {T_3}$, where
\begin{equation}
    \begin{array}{l}
{T_1} = \mathop {\max }\limits_{\forall k} \Big\{ {\begin{array}{*{20}{l}}
{T_{m,s}^{k,\mathtt{cp}}\left[ {{t_j}} \right]\sigma _{m,s}^k\left[ t \right]g_s^k\left[ t \right]}\\
{\times ( {1 - \sum_{k' \in \mathcal{K} \setminus \{k\}} {\sigma _{m,s}^{k,k'}\left[ t \right]}  - \sigma _{m,s}^{k,\mathtt{c}}\left[ t \right]} )}
\end{array}} \Big\}\\
{T_2} = \mathop {\max }\limits_{\forall k} \left\{ {\sigma _{m,s}^k\left[ t \right]\sum_{k' \in \mathcal{K} \setminus \{k\}} {\sigma _{m,s}^{k,k'}\left[ t \right]} g_s^{k'}\left[ t \right]T_{m,s}^{k',\mathtt{cp}}\left[ {{t_j}} \right]} \right\},\\
{T_3} = \mathop {\max }\limits_{\forall k} \left\{ {\sigma _{m,s}^k\left[ t \right]\sigma _{m,s}^{k,\mathtt{c}}\left[ t \right]T_{m,s}^{\mathtt{c,cp}}\left[ {{t_j}} \right]} \right\}.
\end{array}
\end{equation}
Following similar steps used in the transformation of $Y_1$, we apply \eqref{proof_logic_simplify} along with the conditions in \eqref{lemma1_eq1} and \eqref{lemma2_eq3} to decouple the strong coupling of binary variables. For $T_2$, we further break down this coupling using the conditions in \eqref{lemma2_eq3} and \eqref{lemma3.2}, combined with \eqref{proof_logic_simplify}. 
To simplify $T_3$, we apply condition \eqref{lemma2_eq4}. Finally, $ T_{m,s}^{\mathtt{tot,cp}}\left[ {{t_j}} \right]$ is rewritten as in \eqref{T_tot_cp_reform}.

\begingroup
\bibliographystyle{IEEEtran}
\bibliography{Bibliography}
\endgroup

\end{document}